\newcommand{\m}[1]{\mbox{\bf{#1}} }
\newcommand{\etr}{ \ensuremath{ {\rm etr}  }   }
\newcommand{\tr}{ \ensuremath{ {\rm tr}  }   }
\renewcommand{\v}[1]{\mbox{\boldmath{${\rm #1}$}}}
\newtheorem{prop}{Proposition}
\newcommand{\E}[1]{{\rm E}[ \ensuremath{ #1 } ]  }
\begin{document}
\title{A hierarchical eigenmodel for pooled covariance estimation}
\author{Peter D. Hoff \thanks{Departments of Statistics and Biostatistics,
University of Washington,
Seattle, Washington 98195-4322.
Web: \href{http://www.stat.washington.edu/hoff/}{\tt http://www.stat.washington.edu/hoff/}. 
This research was partially supported by NSF grant  SES-0631531. 
The author thanks Michael Perlman for helpful discussions. 
        } 
}
\date{ \today }
\maketitle

\begin{abstract}
While a set of covariance matrices corresponding to different populations
are unlikely to be exactly equal they
can still  exhibit a high degree of similarity.
For example, some pairs of variables may be positively
correlated across most groups, while the correlation between
 other pairs  may be  consistently
negative. In such cases much of the similarity across covariance matrices
can be described by similarities in their principal axes, the axes
defined by the eigenvectors of the  covariance matrices.
Estimating the degree of across-population eigenvector heterogeneity
can be helpful for a variety of estimation tasks.
Eigenvector matrices can be pooled to form a central
set of principal axes,
and to the extent that the axes are similar, covariance estimates
for populations having small sample sizes can be stabilized
by shrinking %estimates of their population-specific
their
principal axes towards the across-population center.
To this end, this article develops a hierarchical
model and estimation procedure for
pooling principal axes  across several populations.
The model for the across-group heterogeneity
is based on a matrix-valued antipodally symmetric
Bingham distribution that can flexibly describe
notions of ``center'' and ``spread'' for a population
of orthonormal matrices.

\vspace{.2in}
\noindent {\it Some key words}:
Bayesian inference, copula,
Markov chain Monte Carlo, principal components,
random matrix,  Stiefel manifold.

\end{abstract}

\section{Introduction}
Principal component analysis is a well-established 
procedure for describing the features of 
a covariance  matrix. 
Letting $\m U\Lambda \m U^T$  be the eigenvalue decomposition of the 
covariance matrix of a $p$-dimensional 
random vector $\v y$,  the principal components of $\v y$ are the 
elements of the transformed mean-zero vector $\m U^T(\v y-\E{\v y})$. 
From the orthonormality of $\m U$ it follows that the
elements of the principal 
component vector
% has an expectation of zero and that its 
%elements are uncorrelated, 
are uncorrelated, with  variances equal to the 
diagonal of $\Lambda$. 
Perhaps more importantly, the matrix $\m U$  provides a
natural coordinate system for describing the orientation of the 
multivariate density of $\v y$:  Letting  $\v u_j$ denote the 
$j$th column of $\m U$, 
$\v y$ can be expressed as $\v y -  \E{\v y} = z_1 \v u_1 + \cdots +z_p \v u_p$, 
where
$(z_1,\ldots, z_p)^T$ is a vector of 
uncorrelated mean-zero random variables with diagonal covariance matrix $\Lambda$. 

Often the same set of variables are measured in multiple populations. 
Even if the covariance matrices differ across populations, 
it is natural to expect that they share some common structure, 
such as the correlations between some pairs of variables having  common 
signs across the populations. 
With this situation in mind, 
\citet{flury_1984} developed estimation and testing procedures 
for the ``common principal components'' model, 
in which a set of covariance matrices $\{ \Sigma_1,\ldots, \Sigma_K\}$ 
have common eigenvectors, so that
$\Sigma_j = \m U\Lambda_j \m U^T$ for each $j\in \{1,\ldots, K\}$. 
A number of variations of this model have since appeared:
\citet{flury_1987} and \citet{schott_1991,schott_1999} consider cases 
in which only certain 
columns or subspaces of $\m U$ are shared across populations, 
and \citet{boik_2002} describes a very general model in 
which  eigenspaces can be shared between 
all or some of the populations.

These approaches all assume that certain eigenspaces are either exactly 
equal or completely distinct across a collection of covariances matrices. 
In many cases these two alternatives are too extreme, 
and it may be desirable to recognize situations in which 
eigenvectors are similar but not exactly equal. 
To this end, 
this article develops a hierarchical model to assess heterogeneity 
of principal axes across a set of populations. 
This is accomplished with the aid of a 
probability distribution over the orthogonal group $\mathcal O_p$
%that is appropriate for describing heterogeneity across a set of 
%principal component axes. The distribution 
which can be 
used in a hierarchical model for
sample covariance matrices,  allowing for pooling of covariance information 
and a description of similarities and differences across populations. 
Specifically,
% in the case of normally distributed data the 
%within-population sampling model is that the sample 
%sum of squares matrix $S_k = Y_k ^T(I-\frac{1}{n_k}11^T)Y_k$ 
%has a Wishart$(\Sigma_k,n_k-1)$ distribution. The between-population 
this article develops a sampling model for 
across-population covariance heterogeneity in which 
%sampling model is that
% $\Sigma_k = U_k \Lambda_k U_k^T$ and 
\begin{eqnarray}
  p(\m U|\m A,\m B,\m V )& =& c(\m A,\m B)\etr(\m B \m U^T \m V \m A \m V^T \m U)   \\
\m U_1,\ldots, \m U_K & \sim & \mbox{i.i.d.\ } p(\m U|\m A,\m B,\m V) \nonumber  \\ 
\Sigma_k &= & \m U_k \Lambda_k \m U_k^T,  \nonumber
\end{eqnarray}
where $\m A$ and $\m B$ are diagonal matrices and 
$\m V\in \mathcal O_p$.
The above distribution is a type of generalized Bingham distribution 
 \citep{khatri_mardia_1977, gupta_nagar_2000} that is 
appropriate for modeling  principal component axes.
Section 2 of this article describes some %previously unpublished 
 features of this distribution, in particular 
how $\m A$ and $\m B$ represent the variability of $\{ \m U_1,\ldots, \m U_K\}$
and how $\m V$ represents the mode. 
Parameter estimation is discussed in Section 3, in which a 
Markov chain Monte Carlo algorithm is developed which allows 
for the joint estimation of $\{\m A,\m B, \m V\}$ as well as 
 $\{ (\m U_k,\Lambda_k), k=1,\ldots, K\}$. 
The estimation scheme is illustrated with two example data  analyses 
in Sections 4 and 5. The first dataset, previously analyzed 
by \citet{flury_1984} and \citet{boik_2002} among others, 
involves 
skull measurement data on four populations of voles. 
Model diagnostics and comparisons indicate that
%provides
%a comparison to previous work. 
the proposed hierarchical model represents certain 
features of the observed covariance matrices better than do less 
flexible models. 
%% show some data, or variability of pa's here?
The second example involves survey data from 
different states across the U.S.. 
The number of observations per state varies a great deal, and 
many states  have only  a few observations. 
The example shows how the hierarchical model  %provides stable 
%covariance estimation for such situations by 
shrinks correlation estimates 
towards the across-group center when within-group data is limited. 
Section 6 provides a discussion of the hierarchical model and a 
few of extensions of the approach.

\section{A generalized Bingham distribution}
The eigenvalue decomposition of a positive definite covariance
matrix $\Sigma$ is given by $\Sigma = \m U \Lambda \m U^T$, 
where $\Lambda$ is a diagonal matrix of positive numbers
$(\lambda_1,\ldots, \lambda_p)$
 and 
$\m U$ is an orthonormal matrix, so that $\m U^T\m U=\m U\m U^T=\m I$. 
Writing $\m  U \Lambda \m U^T = \sum_{j=1}^p \lambda_j \v u_j \v u_j^T$, 
we see that multiplication of a column of $\m U$ by -1 does not 
change the value of the covariance matrix, highlighting the fact that
that the columns of $\m U$ represent not directions of variation, 
but axes.  As such, any probability model representing 
variability across a set of principal axes should be antipodally 
symmetric in the columns of $\m U$, meaning that $\m U$ is equal in distribution to 
$\m U\m S$ for any diagonal matrix $\m S$ having  diagonal elements equal to 
plus or minus one, since $\m U$ and $\m U\m S$ represent the same 
axes. 

 \citet{bingham_1974} described a probability distribution having a density
proportional to $\exp\{ \v u^T \m G\v u\}$ 
for normal vectors $\{ \v u: \v u^T\v u=1 \}$. 
This density has 
antipodal symmetry, 
making it a candidate model for a random axis. 
\citet{khatri_mardia_1977} and \citet{gupta_nagar_2000} discuss
a matrix-variate version of the Bingham distribution, 
%\citet{bingham_1974}'s 
\begin{equation}
 p(\m U| \m G,\m H ) \propto \etr( \m H \m U^T \m G \m U ) 
\label{eq:gbf}
\end{equation}
where $\m G$ and $\m H$ are $p\times p$ symmetric matrices. 
Using the eigenvalue decompositions 
$\m G= \m V \m A \m V^T$ and $\m H=\m W \m B\m  W^T$, this density can be rewritten as
$ p(\m U| \m A,\m B,\m V,\m W ) \propto \etr(\m  B [\m  W^T\m  U^T\m  V]\m  A  [\m V^T\m  U\m  W ]  )$. 
A well-known feature of this density is that it 
depends on $\m A$ and $\m B$ only through the differences among their 
diagonal elements. 
This is because for any orthonormal matrix $\m X$ (such as $\m V^T\m  U\m  W$), we have
\begin{eqnarray*}
\tr(  [\m B+d\m I]\m X^T [\m A+c\m I]\m X) &=& \tr(\m B\m  X^T\m A\m X) + d\times \tr(\m X^T\m A\m X)+c \times \tr(\m B\m  X^T \m  X) + 
      cd \times\tr(\m  X^T\m X) \\
  &=& \tr(\m B \m X^T\m A\m X) +  d\times \tr(\m A) + c\times \tr(\m B) + cdp
\end{eqnarray*}
and so the probability densities $p(\m U|\m A,\m B)$ and $p(\m U|\m A+c\m I,\m B+d\m I)$ are
proportional as functions of $\m U$ and therefore
equal.  By convention $\m A$ and $\m B$ are usually taken to be non-negative.
In what follows we will set the smallest eigenvalues $a_p$ and $b_p$
to be equal to zero.

Although a flexible class of distributions for orthonormal matrices, 
 densities of the form (\ref{eq:gbf}) % of  matrix-variate Bingham densities 
are 
not necessarily  antipodally symmetric. However, 
we can identify conditions on $\m G$ and $\m H$ which give the desired symmetry. 
If either $\m G$ or $\m H$ have only one unique eigenvalue, then
 $\tr(\m H \m U^T \m G \m U)$ is constant in $\m U$ and therefore trivially 
antipodally symmetric in the columns of $\m U$. Otherwise, we have the following result:
\begin{prop}
 If  $\m G$ and $\m H$ both have more than one unique 
eigenvalue, then a necessary and sufficient condition for 
  $\tr(\m H \m U^T \m G \m U)$ to be antipodally symmetric in the columns 
 of $\m U$ is that $\m H$ be a diagonal matrix. 
\end{prop}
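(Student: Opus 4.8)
The plan is to work directly from the defining identity of antipodal symmetry. Saying that $\tr(\m H \m U^T \m G \m U)$ is antipodally symmetric in the columns of $\m U$ means that $\tr(\m H \m U^T \m G \m U) = \tr\big(\m H (\m U \m S)^T \m G (\m U \m S)\big)$ for every sign matrix $\m S$ (diagonal, entries $\pm 1$) and every $\m U \in \mathcal{O}_p$. By the cyclic property of the trace the right-hand side equals $\tr(\m S \m H \m S\, \m U^T \m G \m U)$, so antipodal symmetry is equivalent to $\tr\big( (\m H - \m S \m H \m S)\, \m U^T \m G \m U \big) = 0$ for all $\m S$ and all $\m U \in \mathcal{O}_p$. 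The matrix $\m D_{\m S} = \m H - \m S \m H \m S$ is symmetric, and its $(i,i)$ entry is $h_{ii}(1 - s_i^2) = 0$, so $\m D_{\m S}$ has identically zero diagonal. Sufficiency is then immediate: if $\m H$ is diagonal then $\m S \m H \m S = \m H$, hence $\m D_{\m S} = 0$ and the identity holds for every $\m S$.

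For necessity I would first establish the key lemma that if $\m D$ is symmetric, $\m G$ has at least two distinct eigenvalues, and $\tr(\m D\, \m U^T \m G \m U) = 0$ for all $\m U \in \mathcal{O}_p$, then $\m D$ is a scalar multiple of $\m I$. Writing $\m G = \m P \Lambda_G \m P^T$ with $\Lambda_G = \mathrm{diag}(\lambda_1,\ldots,\lambda_p)$ and replacing $\m U$ by $\m P^T \m U$, I may assume $\m G = \Lambda_G$ is diagonal, leaving $\m D$ unchanged. Letting $\v r_k$ denote the $k$th row of $\m U$ (viewed as a column vector) and $q(\v x) = \v x^T \m D \v x$ the quadratic form of $\m D$, a direct expansion gives $\tr(\m D\, \m U^T \Lambda_G \m U) = \sum_k \lambda_k\, q(\v r_k)$, where $\v r_1,\ldots,\v r_p$ ranges over all orthonormal bases of $\mathbb{R}^p$ as $\m U$ ranges over $\mathcal{O}_p$.

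The main step is then a permutation argument. Since reordering rows preserves orthogonality, for any fixed orthonormal basis $\v v_1,\ldots,\v v_p$ and any permutation $\pi$ we have $\sum_k \lambda_k\, q(\v v_{\pi(k)}) = 0$. Subtracting the two identities for permutations differing by a transposition of positions $a$ and $b$ yields $(\lambda_a - \lambda_b)\big(q(\v v_{\pi(a)}) - q(\v v_{\pi(b)})\big) = 0$. Because $\m G$ has more than one distinct eigenvalue, some pair of positions satisfies $\lambda_a \ne \lambda_b$; letting $\pi$ vary then forces $q(\v v_i) = q(\v v_j)$ for all $i \ne j$. Thus $q$ is constant on the vectors of every orthonormal basis, so $q(\v x) = q(\v y)$ for every pair of orthogonal unit vectors. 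Evaluating on the orthonormal pairs $\{\v e_i, \v e_j\}$ and $\{(\v e_i + \v e_j)/\sqrt{2},\ (\v e_i - \v e_j)/\sqrt{2}\}$ shows that all diagonal entries of $\m D$ coincide and all off-diagonal entries vanish, i.e. $\m D = c\,\m I$.

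Finally I would apply the lemma to each $\m D_{\m S}$: it gives $\m D_{\m S} = c_{\m S}\,\m I$, but $\m D_{\m S}$ has zero diagonal, so $c_{\m S} = 0$ and hence $\m D_{\m S} = 0$ for every sign matrix $\m S$. Choosing $\m S$ to flip only the $i$th sign makes the $(i,j)$ entry of $\m D_{\m S}$ equal to $2 h_{ij}$ for each $j \ne i$, so its vanishing gives $h_{ij} = 0$ for all $j \ne i$; letting $i$ vary shows $\m H$ is diagonal. I expect the permutation/transposition step to be the main obstacle, since that is precisely where the hypothesis that $\m G$ has two distinct eigenvalues is used: if $\m G$ had a single eigenvalue the sum $\sum_k \lambda_k\, q(\v r_k)$ would collapse to $\lambda\,\tr(\m D)$ and carry no information about the off-diagonal structure of $\m D$, consistent with the degenerate case noted before the proposition.
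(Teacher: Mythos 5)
Your proof is correct, and it takes a genuinely different route from the paper's. Both arguments reduce antipodal symmetry to the identity $\tr\left( [\m H - \m S \m H \m S]\, \m U^T \m G \m U \right) = 0$ for all $\m U \in \mathcal O_p$ and all sign matrices $\m S$, and both dispose of sufficiency by noting $\m S \m H \m S = \m H$ for diagonal $\m H$. For necessity, however, the paper works one sign flip at a time and exploits the explicit structure of $\m D = \m H - \m S\m H\m S$: for a single flipped sign it is a rank-two matrix with eigenvectors $\v e$ and $\m S\v e$ and eigenvalues $\pm d$, and substituting a specially chosen $\m U$ (two columns set to $\v e$ and $-\m S\v e$, placed at positions of unequal eigenvalues of $\m G$) forces $d(1-[\v e^T\m S\v e]^2)(a_k-a_l)=0$ and hence $\m D = 0$. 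You instead prove a standalone lemma --- a symmetric $\m D$ with $\tr(\m D\, \m U^T\m G\m U)=0$ for all $\m U$ must equal $c\m I$ whenever $\m G$ has two distinct eigenvalues --- via the row expansion $\tr(\m D\,\m U^T\Lambda_G\m U)=\sum_k \lambda_k q(\v r_k)$ and a transposition argument over orthonormal bases, then kill the scalar $c$ using the observation that $\m D_{\m S}$ always has zero diagonal. Your lemma is more general and reusable (it characterizes the trace-orthogonal complement of the orbit $\{\m U^T\m G\m U\}$ inside the symmetric matrices) and avoids any eigendecomposition of $\m D_{\m S}$; the paper's argument is more computational but gets by with a single well-chosen test matrix. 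Both correctly isolate where the hypothesis that $\m G$ has more than one distinct eigenvalue is used, and your closing remark about the degenerate collapse to $\lambda\,\tr(\m D)$ matches the paper's discussion preceding the proposition.
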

\begin{proof}  
The
symmetry condition requires that
$\tr(\m H \m U^T \m G \m U ) = \tr( \m S \m H \m S \m U^T \m G \m U )$
%    &=& \tr ( [S H S] X^T G X)  =  \tr ( \tilde H X^T G X). 
%\end{eqnarray*}
for all $\m U\in\mathcal O_p$ and diagonal sign matrices $\m S$.  If $\m H$ is diagonal then $\m S\m H\m S=\m H$ and so
diagonality is a sufficient condition for antipodal symmetry.
To show that it is also a necessary condition, first
let $\m V$ and $\m A$ be the  eigenvector and eigenvalue matrices of $\m G$. 
Then for any orthonormal $\m X$ we have $\m X=\m U^T\m V$ if $\m U=\m V\m X^T$. Therefore 
the symmetry condition is that 
\[ \tr ( [\m H-\m S\m H\m S] \m X \m A \m X^T ) = 0 \  \mbox{ for all $\m X\in \mathcal O_p$ and 
$\m S \in \{ {\rm diag}(\v s), \v s \in \{\pm 1\}^p \}$ }. \]
%diagonal matrices $S$ of signs }.  \]
If we let ${\rm diag}(\m S)=(-1,1,1,\ldots, 1)$  then
$\m D=\m H-\m S\m H\m S$ is zero except for possibly 
$\m D_{[1,-1]}$ and $\m D_{[-1,1]}$, the first row and column of \m D absent 
$d_{1,1}$. 
Additionally, if not all entries are zero then $\m D$ is of rank two with  eigenvectors 
$\v e$ and $\m S\v e$, where $\v e=(|\m H_{[-1,1]}|, h_{2,1},h_{3,1},\ldots, 
    h_{p,1})/(\sqrt {2} |\m H_{[-1,1]}|)$, and corresponding eigenvalues $\pm d$,
where $d= 2|\m H_{[-1,1]}|$. Writing the symmetry condition in terms of the eigenvalues and 
vectors of $\m D$ gives
\[  0=\tr( \m D \m X \m A \m X^T )
   = d \sum_{i=1}^p a_i \v x_i^T \left  ( 
  \v e\v  e^T - \m S\v  e\v  e^T \m S \right ) 
  \v x_i .  \]
The symmetry condition requires that this hold for all orthonormal $\m X$. 
Now if $k$ and $l$ are the indices of  any two unequal eigenvalues of $\m G$, 
we can let $\v x_k=\v e$ and $\v x_l= -\m S\v e$, giving
\begin{eqnarray*} 
 0= \tr( \m D \m X \m A \m X^T ) &=& d  [
    a_k (1- \v e^T\m S\v e\v e^T\m S\v e )  + a_l (\v e^T\m S\v e\v e^T\m S\v e-1) ]  \\
  &=& d (1-[\v e^T\m S\v e ]^2)(a_k-a_l).  % \\
%(e^TSe)^2 &=& 1
\end{eqnarray*}
Since $a_k\neq a_l$ by assumption, this means that either $d=0$
or $(\v e^T\m S\v e)^2=1$. Neither of these conditions are met unless all entries of
$\m D$ are zero, implying that the off-diagonal elements in the first row 
and column of $\m H$ must be zero. Repeating this argument with 
the diagonal of $\m S$ 
ranging over all $p$-vectors consisting of one negative-one and $p-1$ positive ones 
shows that all off-diagonal elements of $\m H$ must be zero. 
\end{proof}

Based on this results we fix the eigenvector matrix of $\m H$ to be $\m I$ and 
our column-wise antipodally symmetric model for
$\m U\in \mathcal O_p$ is
\begin{equation}
 p_B(\m U | \m A,\m B ,\m V )  = c(\m A,\m B) \etr(\m B \m U^T \m V \m A \m V^T\m  U )
\label{eq:asbing}
\end{equation}
where  $\m A$ and $\m B$ are diagonal matrices with $a_1\geq a_2\geq\cdots \geq
 a_p=0$, $b_1\geq b_2\geq\cdots  \geq b_p=0$
and $\m V \in \mathcal O_p$.
Interpreting  these parameters is made easier by 
writing $\m X=\m V^T\m U$ and expanding out the exponent 
of  $p_B$ as
\begin{equation} \tr( \m B \m U^T\m V\m A\m V^T\m U) = \sum_{i=1}^p \sum_{j=1}^p 
   a_i b_j (\v v_i^T\v u_j)^2 =  \sum_{i=1}^p \sum_{j=1}^p a_i b_j x_{i,j}^2 = 
    \v a^T (\m X\circ \m X)\v  b, 
\label{eq:eexp}
\end{equation}
where ``$\circ$'' is the Hadamard product denoting element-wise multiplication.
The value of $x_{i,j}^2$ describes how close column $i$ of $\m V$ is to 
column $j$ of $\m U$. Since both $a$ and $b$ are in decreasing order, 
$a_1b_1$ is the largest term and the density will be large when 
$x_{1,1}^2$ is large. However, due to the orthonormality of $\m X$, a large 
$x_{1,1}^2$  restricts $x_{1,2}^2$ and $x_{2,1}^2$
to be small, which then allows $x_{2,2}^2$ to be large. Continuing on this 
way suggests that the density is maximized if $\m X\circ \m X$ is the identity, 
 i.e.\ $\m U=\m V\m S$ for some diagonal sign  matrix $\m S$. 
%A related result appears without proof in \citet{constantine_muirhead_1976}. 
\begin{prop}
The modes of $p_B$ include $\m V$ and $\{ \m V\m S : \m S = {\rm diag}(\v s), \v s\in \{\pm 1\}^p\}$. 
If the diagonal elements of $\m A$ and $\m B$ are 
distinct then these are the only modes. 
\end{prop}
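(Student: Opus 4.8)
The plan is to pass to the exponent and exploit the doubly-stochastic structure of the Hadamard square $\m X\circ\m X$. Since $c(\m A,\m B)>0$, the exponential is monotone, and $\mathcal{O}_p$ is compact, the modes of $p_B$ are exactly the global maximizers of $\tr(\m B\m U^T\m V\m A\m V^T\m U)$ over $\m U\in\mathcal{O}_p$. Writing $\m X=\m V^T\m U$, which ranges over all of $\mathcal{O}_p$ as $\m U$ does, the expansion in (\ref{eq:eexp}) shows this is equivalent to maximizing $g(\m X)=\v a^T(\m X\circ\m X)\v b=\sum_{i,j}a_ib_jx_{i,j}^2$.

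First I would observe that $\m P:=\m X\circ\m X$ is doubly stochastic: its entries $x_{i,j}^2$ are nonnegative, and the orthonormality of $\m X$ forces every row and every column to sum to one. Thus $g(\m X)=\v a^T\m P\v b$ is a linear functional of $\m P$ evaluated on a subset of the Birkhoff polytope of doubly stochastic matrices. By the Birkhoff--von Neumann theorem the maximum of this linear functional over the whole polytope is attained at a permutation matrix, and since $\v a$ and $\v b$ are both arranged in decreasing order the rearrangement inequality identifies the optimal permutation as the identity, with value $\sum_i a_ib_i$. Because permutation matrices are themselves of the form $\m X\circ\m X$ (take $\m X$ a signed permutation, so that $\m I=\m S\circ\m S$ for any diagonal sign matrix $\m S$), this upper bound is achieved inside the feasible set by $\m X=\m S$, giving $\m P=\m I$. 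Hence every $\m U=\m V\m S$ attains the global maximum, which establishes that $\m V$ and the set $\{\m V\m S\}$ are modes.

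For the second claim I would upgrade the rearrangement step to its strict form. When the diagonal entries of $\m A$ and $\m B$ are distinct, the strict rearrangement inequality gives $\sum_i a_ib_{\pi(i)}<\sum_i a_ib_i$ for every non-identity permutation $\pi$, so the identity is the \emph{unique} optimal vertex of the Birkhoff polytope. A linear functional on a polytope whose unique optimal vertex is $\m I$ is maximized only at $\m I$: any competitor is a convex combination of vertices, and equality in $\v a^T\m P\v b\le\sum_i a_ib_i$ forces every vertex of positive weight to be optimal, hence equal to $\m I$. Passing to the subset of matrices realizable as $\m X\circ\m X$ (the orthostochastic matrices) only shrinks the feasible set while retaining $\m I$, so any mode must satisfy $\m X\circ\m X=\m I$. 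This forces $x_{i,j}=0$ for $i\neq j$ and $x_{i,i}^2=1$, i.e.\ $\m X$ is a diagonal sign matrix $\m S$ and $\m U=\m V\m S$.

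The main obstacle is the uniqueness argument in the previous paragraph: it is not enough that the identity strictly beats the other permutation \emph{vertices}; one must confirm that this strict vertex-optimality propagates to uniqueness over the entire continuous polytope and then transfers to the orthostochastic subset. The convex-combination argument handles this cleanly, and the reduction to doubly stochastic matrices via $\m X\circ\m X$ is the lever that turns an optimization over the orthogonal group into a transparent statement about permutations.
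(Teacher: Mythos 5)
Your proof is correct and follows essentially the same route as the paper: reduce to maximizing the linear functional $\v a^T(\m X\circ\m X)\v b$ over the Birkhoff polytope of doubly stochastic matrices, identify the identity as the optimal (and, under distinctness, uniquely optimal) permutation vertex via the rearrangement inequality, and note that $\m I$ is orthostochastic so the bound is attained exactly when $\m X$ is a diagonal sign matrix. Your convex-combination argument for propagating strict vertex-optimality to uniqueness over the whole polytope simply makes explicit a step the paper leaves implicit.
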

\begin{proof}
The matrix  % that the diagonal elements of $A$ and $B$ are ordered and
 $\m X\circ \m X$ is an  element of the set of orthostochastic matrices,
 a subset of the  doubly stochastic matrices.
The set of doubly stochastic matrices is a compact convex set
whose extreme points are the permutation matrices. Since every
element of this compact convex set can be written as a convex combination
of the extreme points, we have
\[ \max_{\m X\in \mathcal O_p}  \v a^T(\m X\circ \m X)\v b  \leq 
   \max_{\theta}  \v a^T (\sum \theta_k \m P_k )\v b \]
for probability distributions $\theta$ over the finite set of permutation 
matrices. 
If the elements of 
$\v a$ and $\v b$ are distinct and ordered it is 
easy to show that  $\v a^T \m P \v b$ is uniquely maximized 
over permutation matrices $\m P$ by $\m P=\m I$, and so the right-hand side is maximized when 
$\theta$ is the point-mass measure on $\m I$.  Since $\m I$ is orthostochastic, 
the maximum  on the left-hand side is achieved at $(\m X\circ \m X)=\m I$. 
%. Therefore $a^T Z b$ is maximized in $Z$ over 
%the doubly stochastic matrices by $Z=I$. 
\end{proof}

If two adjacent eigenvalues are equal then the density has 
additional maxima. 
For example, if  $b_1=b_2$ then $\v a^T \m Z \v b$ is maximized over
doubly stochastic matrices  $\m Z$
by any convex combination of the matrices corresponding to the permutations
$\{1,2,3,\ldots, p\}$ and $\{2,1,3,\ldots,  p\}$. 
In terms of $\m U$, this would mean that modes are such that 
  $\v u_i^T\v v_i=\pm 1$ for $i>2$, with $\v u_1$ and $\v u_2$ being any 
orthonormal vectors in the null space of $\{ \v v_3,\ldots, \v v_p$\}. 
More generally, how the parameters $(\m A,\m B)$ control the variability of 
$\m U$ around $\m V$ 
can be seen 
by rewriting the exponent of (\ref{eq:asbing}) a few different ways. 
For example, equation (\ref{eq:eexp}), can be expressed as
\begin{eqnarray}
 \tr(\m B \m U^T\m V\m A\m V^T \m U) 
 &=&  \sum_{i=1}^p \sum_{j=1}^p a_i b_j (\v v_i^T \v u_j)^2  \nonumber \\
 &=& \sum_{j=1}^p b_j\v  u_j^T (\m V\m A\m V^T)\v  u_j  
\label{eq:eexp2}
\end{eqnarray}
%\begin{eqnarray*}
%\tr(B U^TVAV^T U) &=& \sum_{i=1}^p \sum_{j=1}^p a_i b_j (v_i^Tu_j)^2 \\
% &=& \sum_{i=1}^p \sum_{j=1}^p  b_j u_j^T( a_i v_i v_i^T) u_j  \\
% &=& \sum_{j=1}^p  b_j u_j^T\left ( \sum_{i=1}^p a_i v_i v_i^T \right ) u_j  
%\end{eqnarray*}
From equations (\ref{eq:eexp}) and (\ref{eq:eexp2}) it is clear that
$b_j=b_{j+1}$ implies that 
$\v u_j \stackrel{d}{=} \v u_{j+1}$ and 
$a_{i}=a_{i+1}$ implies $\v v_i^T\v u_j\stackrel{d}{=} \v v_{i+1}^T\v  u_j$. 
%$a_i=a_{i+1}$ implies that  
% $BU^Tv_i \stackrel{d}{=} BU^T v_{i+1}$. 
In this way the model can represent \emph{eigenspaces} of high probability, 
not only eigenvectors, providing a probabilistic analog 
to the common space models of \citet{flury_1987}. 
%In terms of the 
%Bingham density in $X$, this means that modes of 
%$p(X|A,B,V,W)$ are matrices whose columns are equal to those of
%$VSW^T$ for  diagonal matrices $S$ with diagonal elements of $\pm 1$. 
%% need to assume a, b positive
To illustrate this further, Figure \ref{fig:xxplots} shows the 
expectations of the squared elements of $\m X = \m V^T\m U$ for two different values
of $(\m A,\m B)$ (calculations were based on a Monte Carlo approximation 
scheme described in \citet{hoff_2007c}). 
The plot in the first panel is based on the generalized Bingham distribution 
in which 
%${\rm diag}(A) = {\rm diag}(B)  = (6,3,0,0,0) $. 
${\rm diag}(\m A) = {\rm diag}(\m B)  = (7,5,3,0,0,0) $. 
For this distribution, 
$\v u_1$, $\v u_2$ and $\v u_3$ are highly concentrated around $\v v_1$, $\v v_2$ and 
  $\v v_3$  respectively. Since $b_4=b_5=b_6$,  the vectors 
   $\v u_4$, $\v u_5$ and $\v u_6$  are equal in distribution and close to 
 being uniformly distributed on the null space of $(\v v_1,\v v_2,\v v_3)$. 
These particular values of $(\m A,\m B)$ could represent a situation 
in which the first three eigenvectors are conserved across populations but the 
others are not.
%with the remaining
%eigenvectors representing isotropic variation. 
The second panel of Figure \ref{fig:xxplots} represents a more 
complex situation in which  ${\rm diag}(\m A) = (7,5,3,0,0,0)$ and 
${\rm diag}(\m B)  = (7,7,0,0,0,0) $. For these parameter values
the following components of $\m X=\m V^T\m U$ are equal in distribution:
 columns 1 and 2;
columns 3, 4, 5 and 6; 
rows 2 and 3; rows
4, 5 and 6. 
Such a distribution might represent a situation in which 
a vector $\v v_1$ is shared across populations, but it is equally 
likely to be represented within a population by 
either $\v u_1$ or $\v u_2$.

\begin{figure}
\centerline{\includegraphics[height=2.5in]{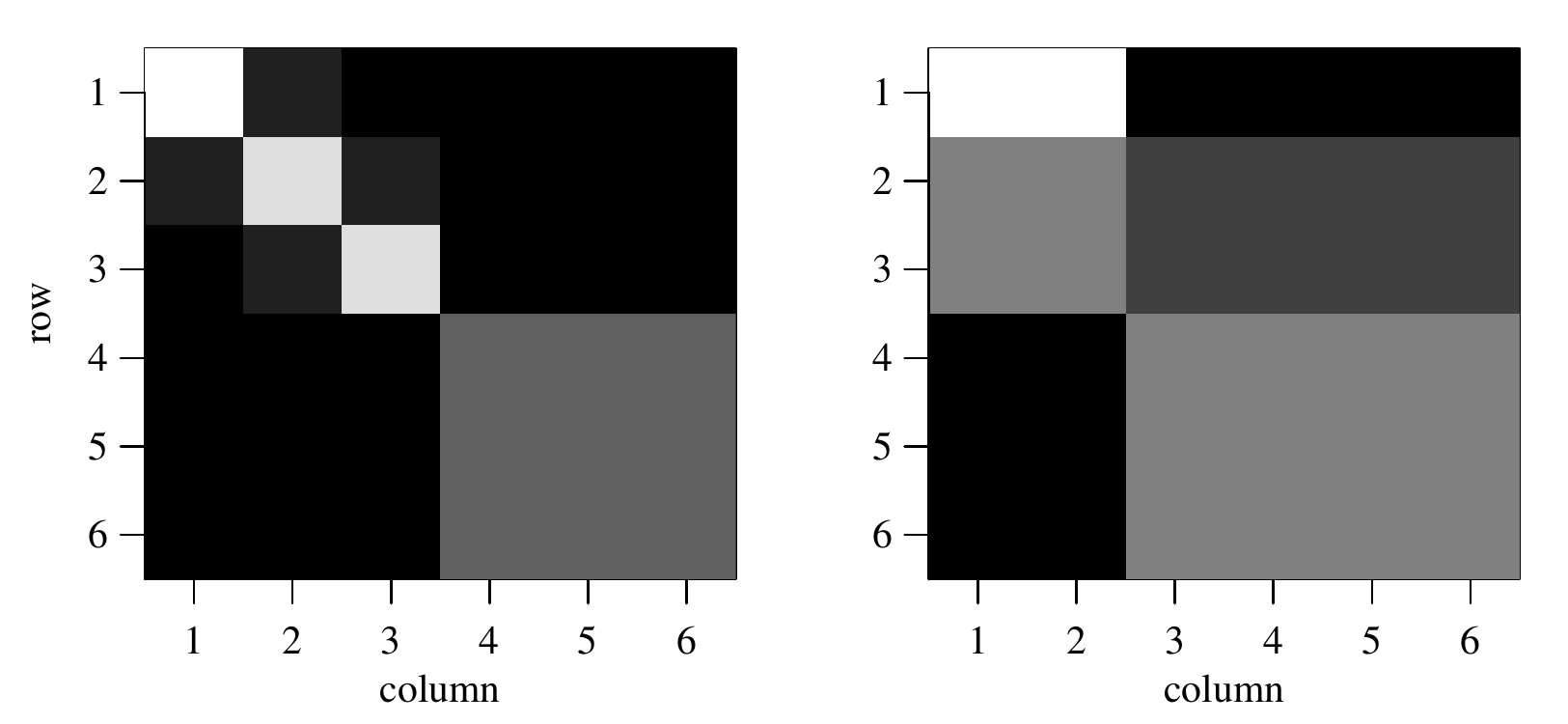}}
\caption{Expected values of the squared entries of $\m X=\m V^T\m U$ under two 
different Bingham distributions. Light shading indicates 
high values. }
\label{fig:xxplots}
\end{figure}

\section{Pooled estimation of covariance eigenstructure}
In the case of normally distributed data the 
sampling model for  $\m S_k = \m Y_k ^T(\m I-\frac{1}{n_k}\v 1\v 1^T)\m Y_k$, 
the   observed
sum of squares matrix in population $k$, 
is a Wishart distribution. 
Combining this  with the model for  principal axes 
 developed in the last section gives the following 
hierarchical  model for covariance structure:
\begin{eqnarray*}
\m U_1,\ldots, \m U_K &\sim & {\rm i.i.d.}\  p(\m U|\m A,\m B,\m V)  \ \ \ \ \ \  \ \ \  \ \  \ \mbox{(across-population variability)}  \\
\m S_k &\sim& {\rm Wishart}(\m U_k \Lambda_k \m U_k^T,n_k-1)
  \ \ \mbox{(within-population variability)}
\end{eqnarray*}
The unknown parameters  to estimate include $\{\m A,\m B,\m V\}$ 
as well as the within-population
covariance matrices, parameterized as $\{ (\m U_1,\Lambda_1),\ldots, 
 (\m U_K,\Lambda_K) \}$. 
In this section we describe a Markov chain Monte Carlo algorithm 
that generates approximate samples from the posterior distribution for 
these parameters, allowing for estimation and inference. 
The Markov chain is constructed with Gibbs sampling, in which 
each parameter is 
iteratively 
resampled from its 
 full conditional 
distributions. 
%This distribution of these sampled parameter values
%converges to the posterior 
%where available, and Metropolis-Hastings proposals 
%otherwise. 
We first describe conditional updates
of the across-population parameters $\{\m A,\m B,\m V\}$, 
then describe 
pooled estimation of each $\m U_k$, which combines the population-specific 
information $\m S_k$ with the across-population information in $\{\m A,\m B,\m V\}$. 
%Additionally, we discuss different models for pooling the eigenvalue matrices
%$\{\Lambda_1,\ldots \Lambda_K\}$. 
%A summary of the Markov chain Monte Carlo algorithm is outlined 
%at the end of the section. 

\subsection{Estimation of across-population parameters}
%\paragraph{Full conditional distribution of $V$}
Letting the prior distribution for $\m V$ be the uniform (invariant) 
measure on $\mathcal O_p$, we have
\begin{eqnarray*}
  p(\m V|\m A,\m B,\m U_1,\ldots, \m U_K) &\propto& p(\m V) \prod_{k=1}^K p(\m U_k|\m A,\m B,\m V)    \\
   &\propto& \etr( \sum_{k=1}^K  \m B \m U_k^T \m V \m A \m V^T \m U_k  ) \\
   &=&  \etr( \sum_{k=1}^K  \m  A \m V^T \m U_k \m  B \m  U_k^T \m V ) 
   = \etr( \m A \m V^T  [\sum \m U_k \m B \m U_k^T ] \m V ), 
\end{eqnarray*}
and so the full conditional distribution of $\m V$ is a generalized Bingham
distribution, of the same form as described in the previous 
section. Conditional on the values $\{\m A,\m B,\m U_1,\ldots,\m U_K\}$, 
pairs of columns of $\m V$ can be sampled from their full 
conditional distributions using a method described in 
\citet{hoff_2007c}.  

Obtaining full conditional  distributions for $\m A$ and $\m B$ 
is more complicated. 
The joint density of $\{\m U_1,\ldots, \m U_K\}$
is 
$c(\m A,\m B)^K \etr ( \sum \m A \m V^T  \m U_k \m B \m U_k^T \m V )$.
From the terms in the exponent  we have 
\[ \tr (\sum \m B \m U_k^T\m V^T \m A \m V^T\m  U_k)= \sum_{k=1}^K \tr(\m B \m U_k^T\m V \m A \m V^T\m U_k)\\
%  &=&  \sum_{k=1}^K \sum_{i=1}^p \sum_{j=1}^p a_i b_j (v_i^Tu_{j,k})^2  \\
   = \sum_{k=1}^K  \sum_{i=1}^p \sum_{j=1}^p a_i b_j  (\v v_i^T\v u_{j,k})^2  = \v a^T \m M\v b 
\]
where $\m M$ is the matrix $\m M=\sum_{k=1}^K (\m V^T \m U_k)\circ (\m V^T \m U_k)$.  
% with entry $i,j$ equal to 
%   $\sum_{k=1}^K (v_i^Tu_{j,k})^2$. 
 The normalizing constant $c(\m A,\m B)$ is equal to $_0F_0(\m A,\m B)^{-1}$, 
where  $_0F_0(\m A,\m B)$ is a
type of hypergeometric function with matrix arguments \citep{herz_1955}. 
Exact calculation 
of this quantity is problematic, although approximations have been 
discussed in \citet{anderson_1965,constantine_muirhead_1976} and \citet{muirhead_1978}.
% \citet{constantine_muirhead_1976} and \citet{muirhead_1978}. 
The first-order term in these approximations is 
\[ c(\m A,\m B) \approx  \tilde c(\m A,\m B) =2^{-p} \pi^{- {p \choose 2}} e^{-\v a^T\v b} \prod_{i<j}
   (a_i-a_j)^{1/2}(b_i-b_j)^{1/2} . \]
This gives the following approximation to the 
likelihood  for $\m A,\m B$: 
\begin{eqnarray}
  p(\m U_1,\ldots, \m U_K|\m A,\m B,\m V)
 &\approx & \tilde c(\m A,\m B)^K \etr(\v a^T \m M \v b) \nonumber  \\
 &\propto&  \exp\{ - \v a^T  (K\m I-\m M)\v b\} \prod_{i<j}
   (a_i-a_j)^{K/2}(b_i-b_j)^{K/2} 
\label{eq:ablik}
\end{eqnarray}
However, there is an identifiability issue with this likelihood:
As seen above, since $\m A$ and $\m B$ are diagonal, 
$\tr ( \m B \m X^T\m  A \m X ) $ simplifies to
$\v a^T (\m X \circ \m X)\v b= \sum_i \sum_j a_i b_j x_{i,j}^2 $.
This means that for any $c>0$, $p(\m U|\m A,\m B,\m V) = p(\m U| c\m A,c^{-1}\m B,\m V)$, and so
the scale of $\m A$ and $\m B$ are not separately identifiable.
To account for this, we parameterize $\m A$ and $\m B$ as follows:
\begin{eqnarray*}
{\rm diag}(\m A) &=& (a_1,\ldots, a_p) = \sqrt{w}(\alpha_1,\ldots, \alpha_p) \\
{\rm diag}(\m B) &=& (b_1,\ldots, b_p) = \sqrt{w}(\beta_1,\ldots, \beta_p) 
\end{eqnarray*}
where $w>0$, $1=\alpha_1>\alpha_2>\cdots >\alpha_{p-1}> \alpha_p =0$ and
 $1=\beta_1>\beta_2> \cdots > \beta_{p-1} >\beta_p =0$.
Rewriting (\ref{eq:ablik}) in terms of these parameters and multiplying 
by a prior distribution $p(\v \alpha,\v \beta,w) $ gives
\begin{equation}
  p(\v\alpha,\v \beta,w| \m M) \propto 
  p(\v \alpha,\v \beta,w) \times \exp\{ -w \v \alpha^T  (K\m I-\m M)\v \beta \} 
   w^{  {p \choose 2}  K/2 }
   \prod_{i<j}
   (\alpha_i-\alpha_j)^{K/2}(\beta_i-\beta_j)^{K/2}. 
\label{eq:abwpost}
\end{equation}
In what follows we take the prior distribution $p(\v \alpha,\v \beta,w)$ 
such that $1>\alpha_{2}>\cdots > \alpha_{p-1} >0$ 
and  $1>\beta_{2}>\cdots > \beta_{p-1} >0$ are two independent 
sets of  order statistics of uniform random variables on $[0,1]$, 
and $w$ has a gamma distribution. With these priors, 
the values of $\v \alpha$ and $\v \beta$ can be sampled from their 
full conditional distributions on a grid of $[0,1]$, and 
the full conditional distribution of $w$ is  a gamma distribution.
For example, if $w\sim$gamma($\eta_0/2,\tau_0^2/2$) {\it a priori}, then 
 $p(w|\v \alpha, \v \beta, \m M) $ is 
gamma$(\eta_0/2+{p \choose 2}K/2  , \eta_0 \tau_0^2/2+\v \alpha^T(K\m I-\m M)\v \beta )$.

We should keep in mind that this full conditional distribution is based
on an approximation to the normalizing constant $c(\m A,\m B)$
(although it is a ``bona fide'' full conditional distribution 
under the prior $\tilde p(\m A,\m B) =
 p(\m A,\m B)[\tilde c(\m A,\m B)/c(\m A,\m B)]^K$). 
Results from 
\citet{anderson_1965} show that 
in terms of the parameters 
$\{\v \alpha,\v \beta, w\}$, 
\begin{equation}
  \frac{c(\m A,\m B)}{\tilde c(\m A,\m B)} \approx  
    1+ \frac{1}{4w}\sum_{i<j} [(\alpha_i-\alpha_j)(\beta_i-\beta_j)]^{-1} +
    O(\frac{1}{w^2})
\label{eq:capprox}
\end{equation}
Further terms in the expansion are  available in \citet{anderson_1965}. 
In problems where $w$ is large then the approximation is likely to be 
a good one. 
In cases where the differences between consecutive $\alpha_i$'s 
or $\beta_j$'s is small compared to $1/w$ then it may be desirable to 
correct for the approximation using a few additional terms from 
(\ref{eq:capprox}) via a Metropolis-Hastings procedure. 
For example, letting $h(\v \alpha,\v \beta,w)= 1+
 \sum_{i<j} [4w (\alpha_i-\alpha_j)(\beta_i-\beta_j)]^{-1}$, 
if $w_s$ is the current value of $w$ in the Markov chain and 
$\tilde w$ is sampled from the approximate  full conditional distribution 
of $w$ based on (\ref{eq:abwpost}), then the correction can be implemented
as follows:
\begin{enumerate}
\item sample a proposal $\tilde w$ from the full conditional of $w$ 
based on  (\ref{eq:abwpost}); 
\item sample $u\sim$ uniform(0,1) and compute 
      $r=  [h(\v \alpha,\v \beta,\tilde w)/h(\v \alpha,\v \beta,w_s)]^K$; 
\item if $u<r$ then set $w_{s+1}=\tilde w$, otherwise set 
       $w_{s+1}= w_s$. 
\end{enumerate}
A similar procedure using one more order in the expansion
%This particular procedure 
is used in the example data analyses of the 
next section. 

Finally, we note that there has been some recent progress
in computing $_0F_0(\m A,\m B)$ exactly. 
\citet{koev_edelman_2006} provide an algorithm 
that is fast enough to be used in MCMC algorithms for 
problems in which $p$ roughly 5 or less and the 
values of 
$\m A$ and $\m B$ are not too large. For other problems the 
approximations based on (\ref{eq:abwpost}) and 
(\ref{eq:capprox}) still seem necessary.

%To summarize, posterior inference for $(A,B,V)$ given $U_1,\ldots, U_K$
%proceeds via 
%Given current values of $\{ A,B,V\}^s$, construct new values
%   $\{ A,B,V\}^{s+1}$ as follows:
%\begin{enumerate}
%\item Update $(A,B)$:
%\begin{enumerate}
%\item generate  $(a^*,b^*)$ from a symmetric proposal distribution 
%\item sample $z\sim$ uniform(0,1)  and set 
%\[    (A,B)^{s+1} = \left  \{ \begin{array}{l}  
%      (A^*,B^*)\  {\rm if} \ z< f(a^*,b ^*)/f(a,b), \\
%      (A,B)^{s}\  {\rm if} \ z> f(a^*,b ^*)/f(a,b). \end{array} \right . \]
%\end{enumerate}
%\item Update $V$:
%\begin{enumerate}
%\item  Compute $M=  \sum_{k=1}^K U_k B U_k^T $; 
%\item Iteratively over pairs of columns $(j_1,j_2)$ of $V$, 
%    sample $\{ v_{j_1},v_{j_2} \}$ from the generalized 
%    Bingham$(A,M)$ distribution  
%    conditional on $\{ v_j, j\not\in \{j_1,j_2\}$. 
%\end{enumerate}
%\end{enumerate}

\subsection{Estimation of population-specific principal axes}
As described above, the within-population sampling  model for 
the sample sum-of-squares matrix $\m S_k$ is Wishart$(\m U_k\Lambda_k \m U_k^T,n_k-1)$, 
so that as a function of $\m S_k$, $\m U_k$ and $\Lambda_k$, 
\begin{equation}
p( \m S_k |\m U_k, \Lambda_k) \propto    |\Lambda_k|^{-(n_k-1)/2} | \m S_k|^{(n_k-p-2)/2} \times \etr(-\frac{1}{2} \Lambda_{k}^{-1} \m U_k^T \m S_k \m U_k  ). 
\label{eq:wishart}
\end{equation}
In the absence of information from other populations, a uniform 
prior distribution on $\m U_k$ would yield a generalized Bingham 
full conditional distribution for $\m U_k$. However, combining 
(\ref{eq:wishart}) with the across-population information 
$p(\m U_k|\m A,\m B,\m V)$ gives a non-standard full conditional distribution
for $\m U_k$:
\begin{eqnarray*}
p(\m U_k|\m S_k,\Lambda_k,\m A,\m B,\m V) &\propto & 
     p( \m S_k |\m U_k, \Lambda_k) \times  p(\m U_k|\m A,\m B,\m V) \\
&\propto&  \etr(-\frac{1}{2} \Lambda_{k}^{-1} \m U_k^T \m S_k\m  U_k  ) \times 
           \etr(\m B\m U_k^T\m V\m A\m V^T\m U_k). 
\end{eqnarray*}
The terms in the exponents are difficult to combine as they are 
both quadratic in $\m U_k$. Writing out the expression in terms of 
the columns $\{ \v u_{1,k},\ldots, \v u_{p,k}\}$ yields some insight:
\begin{eqnarray*}
\tr(\m B\m U_k^T\m V\m A\m V^T\m U_k -\frac{1}{2} \Lambda_{k}^{-1} \m U_k^T \m S_k \m U_k  ) &=& 
 \sum_{j=1}^p \v u_{j,k}^T \left ( b_j \m V \m A \m V^T -\frac{1}{2}\lambda_{j,k}^{-1} \m S_k
   \right ) \v u_{j,k} 
\end{eqnarray*}
This suggests that the full conditional distribution of the 
$j$th column vector of $\m U_k$ is a vector-valued Bingham distribution. 
This is true in a very limited sense: Since $\m U_k$ is an orthonormal 
matrix the full conditional distribution of $\v u_{j,k}$ given 
the other columns of $\m U_k$ must have support only on $\pm \tilde {\v u}$, 
where $\tilde{\v u}$ represents the null space of the vectors
 $\{ \v u_{1,k},\ldots, \v u_{j-1,k},\v u_{j+1,k},\ldots, \v u_{p,k}\}$. 
Iteratively sampling the columns of $\m U_k$ from their full conditional 
distributions would therefore produce a reducible  Markov chain which would 
not converge to the target posterior distribution. 
One remedy to this situation, used by \citet{hoff_2007c} in the 
context of sampling from the Bingham distribution, 
 is to sample from the full conditional 
distribution of columns taken two at a time. 
Conditional on $\{ \v u_{3,k},\ldots, \v u_{p,k}\}$, the vectors 
$\{ \v u_{1,k},\v u_{2,k}\}$ are equal in distribution to 
  $\m N \m Z$, where $\m N$ is any $p\times 2$ dimensional 
 orthonormal basis for the null 
space of $\{ \v u_{3,k},\ldots, \v u_{p,k}\}$ and $\m Z$ is a random 
  $2\times 2$ orthonormal matrix whose density with respect to the 
uniform measure is proportional to 
\[ 
p(\m Z) \propto \etr\left ( \v z_1^T  \m G \v z_1 + \v z_2^T \m H \v  z_2 \right ), 
 \]
where $\m G= \m N^T(b_1 \m V\m A\m V^T-\lambda_{1,k}^{-1} \m S_k)\m N$, 
      $\m H= \m N^T (b_2 \m V\m A\m V^T-\lambda_{2,k}^{-1} \m S_k)\m N$ and 
$\v z_1$ and $\v z_2$ are the columns of $\m Z$. 
Since $\m Z$ is orthogonal, we can  parameterize it as
%\[ \m Z = \left ( \begin{array}{rr} \cos \phi & s \sin \phi \\
%                          - \sin \phi & s \cos \phi \end{array} \right ) \]
\[ \m Z = \left ( \begin{array}{rr} \cos \phi & s \sin \phi \\
                           \sin \phi & -s \cos \phi \end{array} \right ) \]
for some $\phi \in (0,2\pi)$ and $s=\pm 1$.
%The second column $Z_{[,2]}$ of $Z$ is a linear function of
%the first column $Z_{[,1]}$,
%and 
The uniform density on the circle is constant in $\phi$,
so the joint density of
 $(\phi, s)$ is simply $p(\m Z(\phi,s))$.
Sampling from this distribution can be accomplished by first sampling
$\phi \in (0,2\pi)$ from a density proportional to
\[ p(\phi )\propto 
  \exp( [g_{1,1} + h_{2,2}] \cos^2 \phi  +
         [h_{1,1} + g_{2,2}] \sin^2 \phi  +
        [g_{1,2}+g_{2,1} -h_{1,2}-h_{2,1}]\cos \phi \sin \phi ) ,\]
and then sampling $s$  uniformly from $\{-1,+1\}$.

\subsection{Estimation of eigenvalues}
From (\ref{eq:wishart}) we see that the conditional distribution of $\Lambda_k$ 
given $\m U_k$ and  $\m S_k$ has the following form:
\begin{eqnarray*}
p(\Lambda_k|\m U_k,\m S_k) &\propto&  p(\Lambda_k) |\Lambda_k|^{-(n_k-1)/2}\etr(-\frac{1}{2}\Lambda_k^{-1} \m U_k^T \m S_k \m U_k) \\ 
&=& p(\Lambda_k) \prod_{j=1}^p \lambda_{j,k}^{-(n_k-1)/2} \exp\{ -\frac{1}{2}\sum_{j=1}^p 
        \lambda_{j,k}^{-1} \v u_{j,k}^T\m S_k \v u_{j,k} \} 
\end{eqnarray*}
The part not involving the prior distribution has the form of an 
inverse-gamma density, and indeed, if $p(\Lambda_k)$ were the product 
of inverse-gamma densities with parameters $(\nu_0/2, \nu_0\sigma_0^2/2)$ 
then the full conditional distribution of $\lambda_{j,k}$ would be 
inverse-gamma$[ (\nu_0+n-1)/2, (\nu_0\sigma_0^2 + \v u_{j,k}^T\m S_k \v u_{j,k})/2]$. 
However, it may be desirable to add more structure 
to the estimation of the eigenvalues. 
%One modeling decision to make is whether or not the order of the principal axes
%matters.  
In usual one-sample principal component analysis the eigenvalues are 
labeled in order of decreasing magnitude
and attention is  focused on 
the ``first few'' eigenvectors, i.e.\ those corresponding to the 
largest eigenvalues. 
In terms of making comparisons of eigenvectors 
across groups, 
restricting the eigenvalues to be ordered means that the ordered 
columns of  $\m V$ refer to the ordered columns of $\m U$. 
One concern about such a restriction would be how it might affect 
inference in the case of  a shared                        
eigenvector  that is the first principal axes in some groups, and 
possibly the second or third in other groups. 
This sort  of heterogeneity can in fact be represented with the 
generalized Bingham distribution even if the eigenvalues are order-restricted. 
For example, the distribution with ${\rm diag}(\m A) = (a,0,0,\ldots)$ and
${\rm diag}(\m B) = (b,b,0,\ldots)$ represents a population in
which, with equal frequency,
 one of the first two columns of $\m U$ is near the first
column of $\m V$.
%In contrast, the common principal component methods of \citet{flury_1987}
%and others generally allow the 
%eigenvalues of a group-specific covariance matrix to be unordered, as these
%approaches do not have 
Because of this flexibility, in what follows we estimate the eigenvalues in each group as being 
ordered.  A convenient prior distribution is that $p(\Lambda_k)$ is the product 
of inverse-gamma densities described above, but restricted to the space 
$\lambda_{1,k}>\lambda_{2,k} >\cdots > \lambda_{p,k}$.
The full conditional distribution of $\lambda_{j,k}$ is then 
inverse-gamma$[ (\nu_0+n-1)/2, (\nu_0\sigma_0^2 + \v u_{j,k}^T\m S_k \v u_{j,k})/2]$
but restricted to the interval $(\lambda_{j+1,k}, \lambda_{j-1,k} )$. 

% use groups, not populations?

\subsection{Summary of MCMC algorithm}
The unknown parameters in the hierarchical model  %described above 
are the group-specific eigenvectors and values $\{\m U_1,\Lambda_1\},\ldots, 
 \{\m U_K,\Lambda_K \}$ and the parameters $\{\m A,\m B,\m V\}$ describing 
the across-group
heterogeneity of eigenvector matrices. 
The diagonal matrices $\m A$ and $\m B$ are parameterized as
\begin{eqnarray*}
{\rm diag}(\m A) &=& (a_1,\ldots, a_p) = {\sqrt w} (\alpha_1,\ldots, \alpha_p) \\
{\rm diag}(\m B) &=& (b_1,\ldots, b_p) = {\sqrt w} (\beta_1,\ldots, \beta_p)
\end{eqnarray*} 
with $1=\alpha_1>\cdots>\alpha_p=0$ and $1=\beta_1>\cdots>\beta_p=0$. 
Convenient prior distributions are 
$\m V\sim $ uniform $\mathcal O_p$, 
$(\alpha_2,\ldots,\alpha_{p-1})$ and 
$(\beta_2,\ldots,\beta_{p-1})$ are uniform on $[0,1]$ subject to the 
ordering restriction, 
 $w\sim $ gamma$(\eta_0/2,\tau_0^2/2)$
and 
 ($1/\lambda_{1,k},\ldots, 1/\lambda_{p,k})$ 
are the order statistics of a sample from a 
  gamma$(\nu_0/2,\sigma_0^2/2)$ distribution. 
With these prior distributions, a Markov chain 
in the unknown parameters that converges to the posterior distribution
 $p( \{ \m U_1,\Lambda_1\},\ldots, \{\m U_K, \Lambda_K\},\m A,\m B,\m V| \m Y_1,\ldots, \m Y_k)$
 can be constructed by iteration of the following sampling scheme:
\begin{enumerate}
  \item Update the within-group parameters:
  \begin{enumerate}
    \item Update $\{ \m U_1,\ldots, \m U_K\}$: For each $k$ and a 
         randomly selected pair $\{j_1,j_2\} \subset \{ 1,\ldots, p\}$; 
    \begin{enumerate}
      \item let $\m N$ be the null space of the columns 
$\{\v u_{j,k}:j\not \in\{j_1,j_2\} \}$; 
     \item compute  
         $\m G= \m N^T (b_{j_1} \m V\m A\m V^T-\lambda_{j_1,k}^{-1} \m S_k)\m N$,
         $\m H=\m N^T (b_{j_2} \m V\m A\m V^T-\lambda_{j_2,k}^{-1} \m S_k)\m N$; 
     \item  sample $\m Z=(\v z_1,\v z_2)\in \mathcal O_2$ 
       from the density proportional to 
            $\exp( \v z_1^T \m G \v z_1 +  \v z_2^T \m H \v  z_2 )$
 \item set $\v u_{j_1,k}$ to be the first column of $\m N\m Z$ 
and  $\v u_{j_2,k}$ to be the second. 
\end{enumerate}
  \item Update $\{\Lambda_1,\ldots, \Lambda_K\}$: Iteratively for each $j\in \{1,\ldots,p\}$ and $k\in \{1,\ldots,K\}$,
   sample $\lambda_{j,k} \sim $inverse-gamma$[ (\nu_0+n_k-1)/2,
    (\nu_0\sigma_0^2 + \v u_{j,k}^T \m S_k \v u_{j,k})/2 ]$, but constrained to
    be in $(\lambda_{j-1,k},\lambda_{j+1,k})$. 
\end{enumerate}
\item Update the across-group parameters:
\begin{enumerate}
\item  Update $\m V$: Sample $\m V$ from the Bingham density
       proportional to $\etr(\m A \m V^T [\sum \m U_k \m B \m U_k^T] \m V) $. 
\item Update $w$, $\v \alpha$ and $\v \beta$: 
 Compute $\m M=\sum_{k=1}^K (\m V^T \m U_k)\circ(\m V^T\m U_k)$  and
\begin{enumerate}
  \item 
 sample $w\sim {\rm gamma}(\eta_0/2+{p \choose 2}K/2  , \eta_0 \tau_0^2/2+\v \alpha^T[K\m I-\m M]\v \beta )$
   \item for each $i\in \{2,\ldots, p-1\}$ sample 
  $\alpha_i \in( \alpha_{i-1},\alpha_{i+1})$ from the density proportional 
to  
$\exp\{ -\alpha_i (w \v \beta^T \m M_{[i,]})\} \prod_{j:j\neq i} |a_i-a_j|^{K/2}$. 
  \item for each $j\in \{2,\ldots, p-1\}$ sample
  $\beta_j \in( \beta_{j-1},\beta_{j+1})$ from the density proportional 
to  $\exp\{ -\beta_j (w \m M_{[,j]}^T\v \alpha)\} \prod_{i:i\neq j} |\beta_j-\beta_i|^{K/2}$.
\end{enumerate}
\end{enumerate}
\end{enumerate}
As discussed in section 3.1, it may be desirable to make 
Metropolis-Hastings adjustments to 
the steps in 2(b) to account for the approximation to the normalizing constant
$c(\m A,\m B)$. 
Functions and example code for this algorithm,  written in the {\sf R} programming environment,
are available at my website, 
 \href{http://www.stat.washington.edu/hoff/}{\tt http://www.stat.washington.edu/hoff/}.

\section{Example: Vole measurements}
\citet{flury_1987}  describes an analysis of skull measurements
on four different groups of voles. The four groups, defined by 
species and sex, are  male and female {\it Microtus californicus} and 
   male and female {\it Microtus ochrogaster}, 
   having sample sizes 
of 82, 70, 58 and 54 respectively. 
Flury provides
 the sample covariance matrices of four log-transformed 
measurements corresponding to  skull length, toothrow length,
cheekbone width and interorbital width. 
The eigenvectors of the four empirical covariance matrices are
given in Table \ref{tab:eev}.
The first eigenvector in each group can roughly be interpreted
as measuring overall size variation, and its values seem fairly similar across
groups. The remaining eigenvectors also display a high degree of similarity
across  groups.
By  performing statistical  tests  of various hypotheses 
regarding the four population covariance matrices, Flury concludes that 
although the sample covariance matrices appear similar, there is 
enough evidence to reject exact equality of the population matrices. 
Furthermore, Flury rejects a hypotheses that the population covariances 
are proportional to each other, and then suggests a model in which 
the the covariance matrices share a single eigenvector 
(interpreted as corresponding 
to size), with the 
remaining eigenvectors and all of the eigenvalues being distinct 
across groups. 
In this section we reanalyze these data 
using the hierarchical eigenmodel discussed above, 
and compare it to the model in \citet{flury_1987} 
and a few others. In particular, we show that allowing
information-sharing across the groups where appropriate, but not 
forcing any of the eigenvectors to be exactly equal, results in 
a model that better represents features of the observed dataset. 

\begin{table}
\begin{center}
\begin{tabular}{rrrrrrrrrr}
\multicolumn{10}{c}{ {\it M. californicus} } \\
\multicolumn{4}{c}{males}  & & & \multicolumn{4}{c}{females} \\
 36.31 & 27.01&  8.05&  2.78  &&& 52.44&  21.14 & 3.75 & 3.17 \\ 
\cline{1-4} \cline{7-10}
0.49  &  -0.31  &  -0.19  &  0.79 &&  &  0.53  &  -0.31  &  -0.29  &  0.74  \\
0.60  &  -0.10  &  0.76  &  -0.24 &&  &  0.56  &  -0.06  &  0.82  &  -0.11  \\
0.55  &  -0.12  &  -0.63  &  -0.54&&   &  0.57  &  -0.14  &  -0.48  &  -0.65  \\
0.30  &  0.94  &  -0.06  &  0.17 & & &  0.30  &  0.94  &  -0.11  &  0.14  \\ \\
\multicolumn{10}{c}{ {\it M. ochrogaster} } \\
\multicolumn{4}{c}{males}  & & & \multicolumn{4}{c}{females} \\
 36.30 & 9.67& 7.97& 2.80 &&&  35.61& 12.35& 8.32& 3.38 \\
\cline{1-4} \cline{7-10}
0.58  &  0.04  &  -0.38  &  0.71  &&&  0.56  &  0.06  &  0.05  &  0.82  \\
0.45  &  0.72  &  0.51  &  -0.13  & && 0.47  &  0.02  &  0.80  &  -0.38  \\
0.51  &  -0.08  &  -0.51  &  -0.69  &&&  0.66  &  -0.25  &  -0.58  &  -0.40  \\
0.45  &  -0.68  &  0.58  &  -0.02  &&&  0.13  &  0.97  &  -0.17  &  -0.14
\end{tabular}
\end{center}
\caption{Eigenvalues and eigenvectors of empirical covariance matrices. 
    Eigenvalues are given in the first row for each group.}
\label{tab:eev}
\end{table}

Using the sample sizes and sample covariance matrices 
provided in \citet{flury_1987},  centered versions
of $\m Y_k^T\m Y_k$ for each group $k\in \{1,\ldots, 4\}$ were reconstructed and 
used as data 
%to generate  10,000 Markov chain iterations 
for the model  described in Section 3. 
The prior distribution of $w$ was taken
to be a diffuse exponential with a mean of 1000, 
and the prior distribution for the inverse-eigenvalues 
was exponential with a mean of 
1. 
A Markov chain consisting of 10,000 iterations was constructed, for which
parameter values were saved every 10th iteration giving a total
of 1000 posterior samples for each  parameter. 
Mixing of the Markov chains was monitored via a variety of
parameter summaries computed at each saved iteration.
For example, for each saved value of $\m A$  and $\m B$
the average and
standard deviation of the logs of the $(p-1)\times (p-1)=9$
non-zero values of $\m A\circ \m B$ were obtained and 
 plotted sequentially in
the first panel of Figure \ref{fig:voleana}.

A  posterior point
estimate  of $\m V$ can be obtained from the eigenvector
matrix of  the posterior mean of $\m V \m A \m V^T$, obtained by averaging  across samples of the Markov chain.
This produces the matrix in Table \ref{tab:vhats}, which is nearly
identical to the eigenvector matrix of the pooled covariance matrix
$\sum \m Y_k^T \m Y_k/(n_k-1)$, which is also given in the table.
Posterior mean estimates of the eigenvalue matrices $\{ \Lambda_1,\ldots, 
 \Lambda_K\}$  were all within 1.0 of their corresponding 
values based on the 
the empirical covariance matrices. 

Table \ref{tab:eev} suggests that the first and fourth 
eigenvectors are the most preserved across groups, whereas 
the other two are less well-preserved. Letting $\hat {\m  U}_k$ be the 
eigenvector matrix  of $\m Y_k^T\m Y_k$ and $\hat {\m V}$ the eigenvector matrix of 
 $\sum \m Y_k^T \m Y_k/(n_k-1)$, the differential heterogeneity of 
the eigenvectors  can be described numerically
by computing the value of 
  ${\rm diag}( \hat {\m V}^T \hat {\m U}_k )^2$ and averaging each of the $p$ 
entries of this vector across the $K$ groups. This $p$-dimensional 
function of the observed data gives 
$t( \m Y_1^T\m Y_1,\ldots,  \m Y_4^T\m Y_4) = (0.98, 0.85, 0.86, 0.96)$, 
indicating that by this metric the first and last eigenvectors 
are most preserved across groups. 
To examine how well the model represents this observed 
heterogeneity, the value of $t( \m Y_1^T\m Y_1,\ldots,  \m Y_4^T\m Y_4)$  
can be compared to its posterior predictive distribution under the model. 
This  was done by generating simulated values $\tilde {\m Y_1}^T\tilde {\m Y}_1,\ldots, 
  \tilde {\m Y}_4^T\tilde {\m Y}_4$ 
every 10th iteration of the Markov chain and computing the 
statistic $t()$ described above, resulting in 1000 samples of 
$t(\tilde {\m Y_1}^T\tilde {\m Y}_1,\ldots,\tilde {\m Y}_4^T\tilde {\m Y}_4)$
    from the posterior predictive distribution. 
For simplicity we present below only the minimum and maximum 
values of this statistic, which for our observed data are
(0.85, 0.98).

\begin{table}
\begin{center}
\begin{tabular}{rrrrrrrrrr}
\multicolumn{4}{c}{hierarchical model estimate}  & & & \multicolumn{4}{c}{empirical estimate} \\
0.54  &  -0.27  &  -0.19  &  0.77 &&  &  0.55  &  -0.25  &  -0.17  &  0.78  \\
0.54  &  -0.10  &  0.80  &  -0.22 &&  &  0.53  &  -0.10  &  0.81  &  -0.23  \\
0.56  &  -0.15  &  -0.56  &  -0.59&&   &  0.57  &  -0.16  &  -0.56  &  -0.57  \\
0.30  &  0.95  &  -0.06  &  0.10 & & &  0.30  &  0.95  &  -0.05  &  0.08
\end{tabular}
\end{center}
\caption{Model-based and empirical estimates of $\m V$.}
\label{tab:vhats}
\end{table}

The top of the second panel of Figure \ref{fig:voleana} 
shows the posterior predictive distributions of this statistic
on a logit scale 
under the hierarchical model which pools information across 
eigenvector matrices. 
The observed values are well within the predicted range, 
%The plot indicates that the 
indicating that the
model is able to represent the differential amounts of eigenvector preservation among 
the observed covariance matrices. % that is present in the observed data. 
  In contrast, the lower
part of the plot shows a posterior predictive distribution 
generated under a one-shared-eigenvector model similar to the one in 
\citet{flury_1987},  obtained 
obtained by fixing $w=1000$, $\alpha_1=\beta_1=1$ and 
$\alpha_j=\beta_j=0$ for $j>1$. This model accurately predicts 
the highest degree of preservation across eigenvectors, 
but underestimates the preservation among other eigenvectors. 
This is not surprising, as this model shares information 
only across a single eigenvector.

\begin{figure}
\centerline{\includegraphics[height=2.85in]{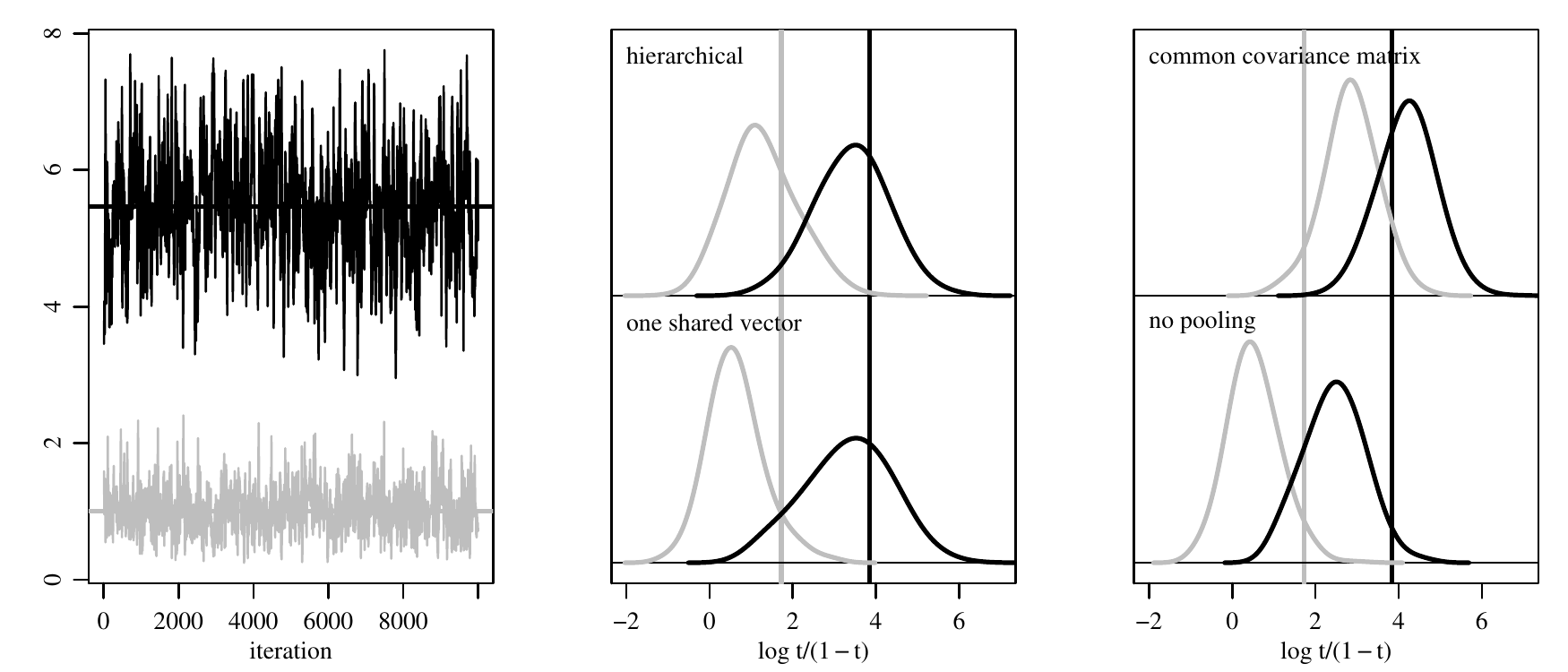}}
\caption{MCMC and model diagnostics for the Vole example.
 The first panel shows
 averages (black) and standard deviations (gray) of the log entries of
  $\m A\circ \m B$
 %$w\v \alpha \v \beta^T$
 at every 10th scan of the Markov chain.
The second  and third panels
show posterior predictive distributions of the minimum and maximum
similarity statistics under a variety of models, with the observed value of each
statistic represented by a vertical line. }
\label{fig:voleana}
\end{figure}

Lastly we fit two other related models:
a ``no pooling'' model in 
which no information was shared across groups and a common covariance 
matrix model in which it is assumed that the covariances are 
exactly identical 
across groups. Not surprisingly these two models  under- and 
over-represent the similarity across eigenvectors of observed 
covariance matrices, as shown in the third panel of Figure \ref{fig:voleana}. 
Taken together, these results indicate that 
assuming complete equality, or completely ignoring similarity, can 
misrepresent the variability of covariance structure across groups.

\section{Example:  National Health Communication Study}
The 2005 Annenberg National Health Communication Survey
(\href{http://anhcs.asc.upenn.edu}{\tt anhcs.asc.upenn.edu})
gathered self-reported health and lifestyle 
 data from 2,989 members of the adult 
U.S.\   population under the age of 65.  Among the variables recorded
were
the following:

\begin{center}
\begin{tabular}{rl}
{\tt state}: & state of residency (including the District of Columbia)\\
{\tt fruitveg}:& typical number of servings of fruit and vegetables per day \\
{\tt exercise}:& typical weekly frequency of exercise   \\
{\tt bmi}: &body mass index  \\
{\tt alcohol}: & number of days in the month consuming five or more alcoholic drinks \\
{\tt smoke}:& typical number of cigarettes smoked per day \\
{\tt age}: & in years \\
{\tt female}: &indicator of being female \\
{\tt income}: &  household income (19 ordered categories) \\
{\tt edu}: &  education level (no degree, high school, 
          some college, Bachelor's degree or higher) 
\end{tabular}
\end{center}
In this section we estimate state-specific correlation matrices 
in a Gaussian copula model for the $p=9$ ordinal variables above. 
More specifically, 
we model the observed data vectors $\v y_1,\ldots, \v y_n$ within a particular 
state as monotone functions of latent Gaussian random variables, so that
\begin{eqnarray*}
\v z_1,\ldots, \v z_n & \sim & \mbox{i.i.d.\ multivariate normal}(\v 0,\Sigma) \\
y_{i,j} &=& g_j(z_{i,j} ). 
\end{eqnarray*}
The non-decreasing functions $\{ g_1,\ldots, g_p\}$ are state-specific
as is the covariance matrix $\Sigma$. 
We compare two models for 
$\Sigma_1,\ldots, \Sigma_K$, the first being one in which 
no information is shared and  $\m U_1,\ldots, \m U_K$ are {\it a priori} 
independent and uniformly distributed on $\mathcal O_p$. The second is the hierarchical 
eigenmodel in which  $\m U_1,\ldots, \m U_K \sim \mbox{i.i.d.}\ p_B(\m U|\m A,\m B,\m V)$, 
with the parameters $\{\m A,\m B,\m V\}$ unknown and estimated from the data, 
using the same prior distributions as in the previous section. 
For both models, the prior distribution on the eigenvalues 
in each group is such that $\{1/\lambda_1 < \cdots < 1/\lambda_p\}$ 
are the order statistics of $p$ independent exponential(1) random variables. 
We note that both of these models ignore the possibility that heterogeneity 
in correlation matrices might be associated with state-specific characteristics 
such as population size or geographic location (although some ad-hoc exploratory 
analyses suggest these effects are small).

Parameter estimation for this hierarchical copula  model can be accomplished by 
iterative sampling of the parameters from their full conditional 
distributions  
as in Section 3 with the latent $\v z$'s taking the roles 
of the observed $\v y$'s, along with iterative sampling of 
the $\v z$'s from their full 
conditional 
distributions (which are constrained normal distributions). 
This latter step is described for a one-group discrete-data copula model in 
\citet{hoff_2007b}.
We note that this is a type of parameter-expanded estimation scheme
\citep{gelman_vandyk_huang_boscardin_2008}
in that the 
scale of each variable $j$ can be represented by both
$g_j$ and $\Sigma_{j,j}$, and so these quantities are not
separately identifiable. 
However, the posterior distribution of $\{ \Sigma_1,\ldots, \Sigma_K\}$ 
induces 
a posterior distribution over state-specific correlation matrices
$\{\m C_1,\ldots, \m C_K$\}, 
which are the parameters of primary interest in copula models. In 
the posterior analysis that follows 
we focus mostly  on comparing the hierarchical and non-hierarchical posterior 
mean estimates of the 
state-specific correlation matrices. 

Markov chains consisting of  105,000 iterations
were constructed  for each of the two models, with results  from the first 5000 iterations being 
discarded to allow for burn-in. 
Parameter values from the remaining iterations 
were saved every 50th 
iteration, leaving 
2000 Monte Carlo samples  for approximating the 
posterior distributions. 
The correlation parameters mixed reasonably well: In the 
hierarchical model,  the
effective sample sizes for 90\% of the parameters was
greater than 500, and for 50\% it was greater than 1200
(effective sample size is an estimate of the 
number of independent samples required to estimate the 
mean to the same precision as with a given autocorrelated sample). 
Mixing of the hierarchical parameters was slower: The first panel 
of Figure \ref{fig:postdiag}
plots the  mean and  standard deviation of 
the logs of the 
64 non-zero values of 
the matrix  $\m A \circ \m B$  at every 50th scan of the Markov 
chain for the hierarchical model. The effective sample sizes for these
two functions of the parameters were both just over 100. 
%copula model also handels missing (at random) data
\begin{figure}
\centerline{\includegraphics[height=3in]{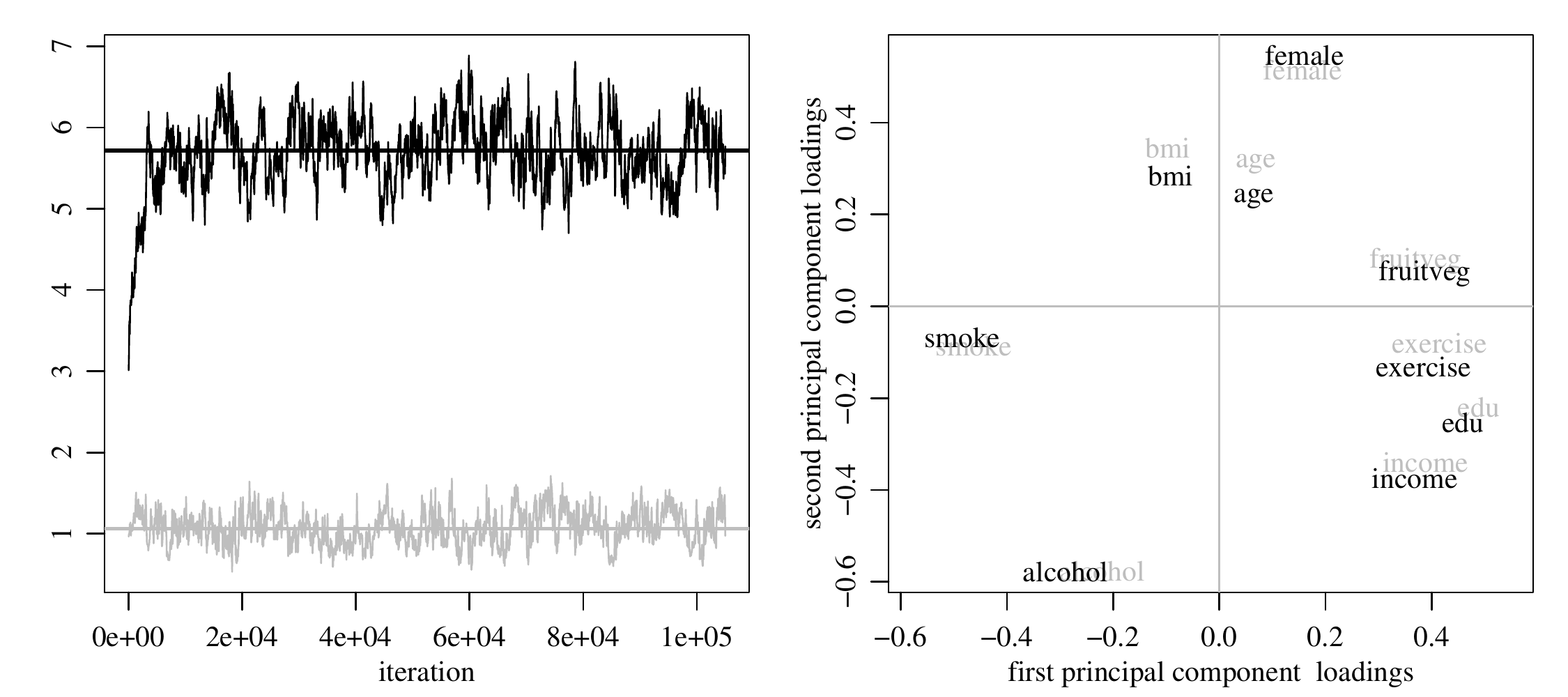}}
\label{fig:postdiag}
\caption{The first panel shows
 averages (black) and standard deviations (gray) of the log entries of
 $\m A \circ \m B$. The second panel shows the first two 
principal component 
loadings for the hierarchical (black) and non-hierarchical (gray) models.}
\end{figure}

\begin{figure}
\centerline{\includegraphics[height=3.90in]{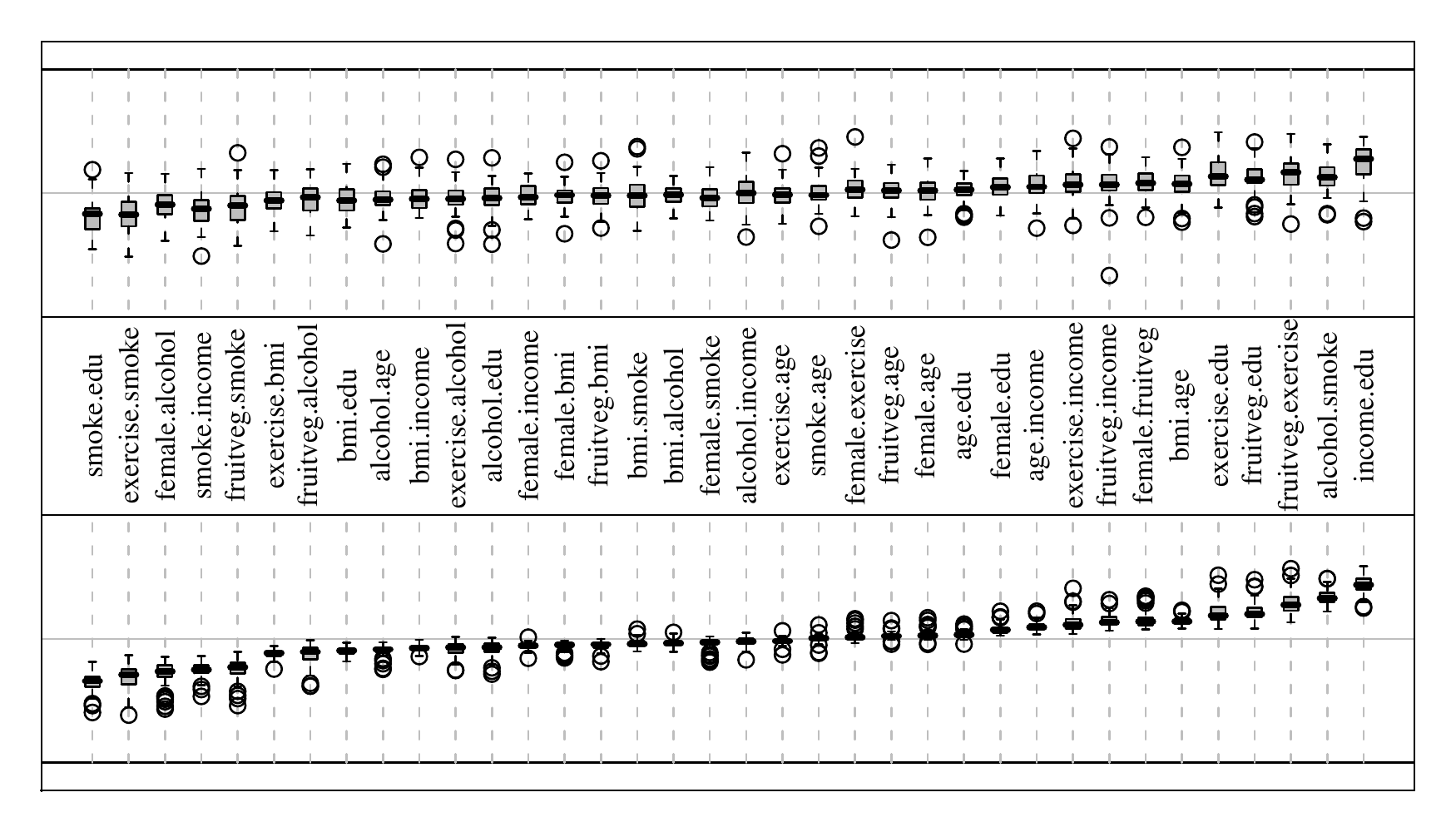}}
\label{fig:corshrink}
\caption{Posterior mean estimates of state-specific correlations. 
The top row are from the non-hierarchical model, the 
bottom from the hierarchical eigenmodel.}
\end{figure}

The second panel of  Figure \ref{fig:postdiag}
plots the first two eigenvectors of the posterior 
mean of the state-averaged correlation matrix 
$\sum_{k=1}^K \m C_k/K$ for each of the two models. 
The results are quite similar, indicating that the 
main correlations across states are described by 
smoking and drinking behavior being negatively 
correlated with education level, income,  fruit and 
vegetable intake and exercise. 
In terms of state-specific correlation matrices however, 
the two models produce quite different results:
The top and bottom plots of Figure \ref{fig:corshrink} give posterior mean
estimates of state-specific correlations from the non-hierarchical and 
hierarchical models, respectively. 
For each pair of variables, a boxplot indicates
the across-state heterogeneity among the $K=51$ parameter estimates
for each  correlation coefficient.
The number of observations per state varies widely, 
with North Dakota and the District of Columbia having 3 respondents
each, whereas Texas and California have 225 and 312 respondents respectively. 
Generally speaking, the estimates for states                 
with lower sample sizes appear at the extremes of the boxplots, 
which is not surprising as these estimates have a 
higher degree of sampling variability. 
% imprecise and  point estimates are subject to a high degree
%of sampling variability. 
Also of  note is the fact that there is no coefficient 
that is estimated as either positive across all states or
 negative across  all states. 
For example, among the highest correlations is that between income 
and education, with an across-state median estimate of 0.28
based on the non-hierarchical model. 
However, there were four states (VT, AK, WY, NE) which 
were estimated as having a negative correlation between these variables. 
The sample sizes from these states were 5, 9, 5 and 15 respectively, 
suggesting that these  low correlations may be due to unrepresentative 
samples. 
In contrast, the hierarchical model recognizes that much of 
the across-state heterogeneity in correlation estimates 
may be due to sampling variability, and shrinks estimates from 
low-sample size states towards the across-state center. 
For example, the hierarchical model gives 
positive point estimates for the correlation between income and education
for all of the states, including VT, AK, WY and NE. 
As shown in the lower half of Figure \ref{fig:corshrink}, across-state heterogeneity 
among  the other correlation coefficients is similarly reduced, 
with 
nearly two-thirds (23 of 36) of the correlation coefficients 
having sign-consistent 
estimates across  all 51 states. 

The effects of hierarchical estimation 
are explored further in 
Figure \ref{fig:eigenshrink}. 
We have two estimates of the eigenvectors for each of 
the $k$  state-specific correlation matrices:
$\hat {\m U}_k$ from the hierarchical model and 
$\check{\m U}_k$ from the  non-hierarchical model. 
We can compute a similarity  between these two 
matrices as the average of the $p$ values of 
 diag$(\check {\m U}_k^T \hat{\m U}_k)^2$. 
The first panel of Figure \ref{fig:eigenshrink} shows that the 
 relationship between the similarity  and the 
within-state sample size
is positive as expected:
Covariance matrices for states with large sample sizes are well-estimated
based on within-state data alone, and  their 
eigenvector estimates are largely unaffected by  hierarchical 
estimation. In contrast,
the amount of information from states with low sample sizes is 
small, and so the estimates for the hierarchical model
 are pulled towards the population 
mode and away from $\check{\m U}_k$. 
The effects of this shrinkage on the  principal axes of 
the correlation matrices are shown  graphically
in the second and third panels of Figure \ref{fig:eigenshrink}. 
The second panel plots the projections of the first 
two columns of each $\check {\m U}_k$ onto the first two columns of the  
eigenvector matrix of the 
pooled correlation matrix. Although heterogeneous, the vectors 
are generally in the same direction, and further inspection shows that 
outliers tend to be states with low sample sizes. The third panel of 
the figure shows the same plot for the projections 
of the columns of each $\hat {\m U}_k$ from the hierarchical model. 
The heterogeneity here represents the estimated across-state 
variability in eigenvectors
after accounting for the within-state sampling variability. 
The axis in this plot that is furthest from the center is that 
representing  Wisconsin, which has relatively high sample size of 69 but 
some extreme correlations:
For example, among states with sample sizes greater than 20, 
Wisconsin has the lowest non-hierarchical estimate of 
the correlation for ({\tt income}, {\tt education}) and 
  ({\tt female}, {\tt bmi}), and the 
 highest  non-hierarchical estimate of
the correlation for ({\tt income}, {\tt alcohol}). 
These correlations make Wisconsin somewhat of an outlier in terms
 of the correlations represented by the first two principal 
components. The relatively large sample size for Wisconsin 
suggests these extreme correlations cannot be solely attributed to 
within-state sampling variability, and this is reflected in the 
state-specific estimated  correlation matrix 
from the hierarchical 
model.

%Under the hierarchical model, Wisconsin had the lowest estimated correlation 
%between {\tt income} and {\tt edu}, 
%{\tt income} and {\tt exercise},  and 
%{\tt income} and {\tt fruitveg}
%and the highest and second highest correlations between 
%{\tt income} and {\tt smoking} and
%{\tt income} and {\tt alcohol}. 
%% need to plot the first two pc's

%The richness of this dataset suggests a number of more 
%detailed analyses. In particular,
%the heterogeneity in correlation matrices could be associated 
%with state-specific characteristics such as 
%population sizes (which are related to sample sizes 
%in this dataset) and 
%geographic locations. 

\begin{figure}
\centerline{\includegraphics[height=2.15in]{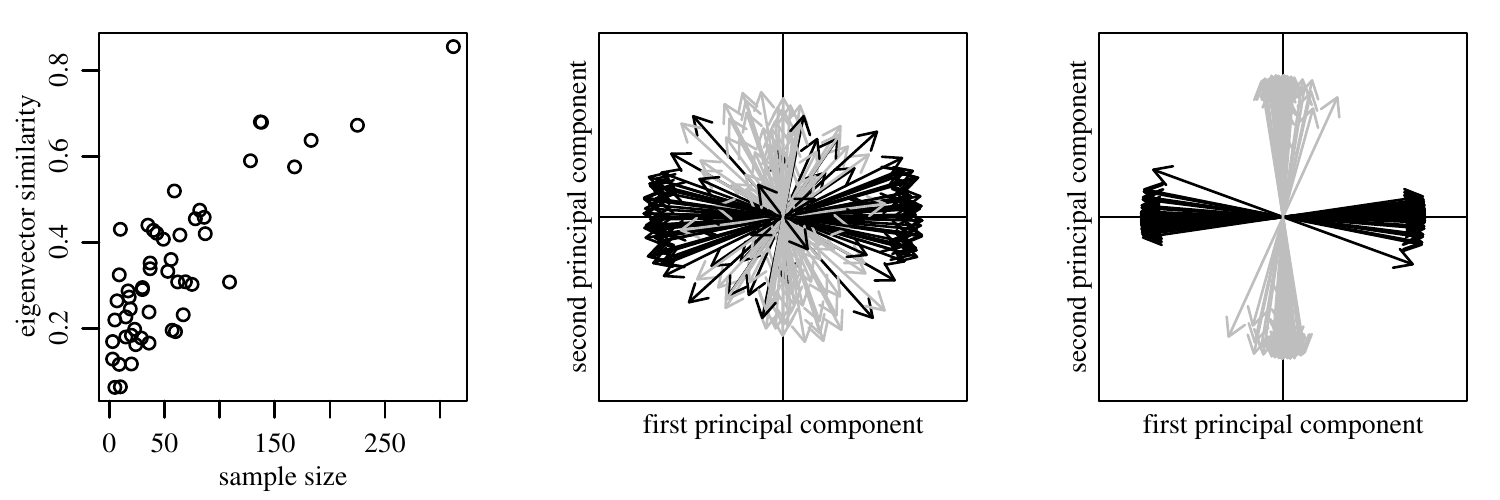}}
\label{fig:eigenshrink}
\caption{Effects of shrinkage on the estimated principal axes. 
The first panel shows the similarity between non-hierarchical  and hierarchical
estimates as  a function of sample size. The second and third show 
heterogeneity across the first two principal axes in the 
non-hierarchical and hierarchical models, respectively. }
\end{figure}

\section{Discussion}
As an alternative to the Bingham distribution, 
a simpler model for across-group covariance heterogeneity 
would be that $\Sigma_1,\ldots, \Sigma_K$ are i.i.d.\ samples from an 
inverse-Wishart distribution. For many applications however, such a model 
may be too simple: The inverse-Wishart distribution has only one parameter
to represent heterogeneity around the mean covariance matrix, and cannot 
represent differential amounts of eigenvector heterogeneity as the Bingham 
distribution can. Additionally, the inverse-Wishart model cannot distinguish 
between across-group eigenvector heterogeneity and across-group 
eigenvalue heterogeneity, as these quantities
are modeled simultaneously.

As we are pooling eigenvector information
across groups it is natural to consider
pooling eigenvalues as well. This would entail
modeling  $\{ \Lambda_1,\ldots, \Lambda_K\}$ as being samples from a common
population, and estimating the parameters of this population using
the data from all $K$ groups. One simple approach to doing this would
be
to estimate the parameters $(\nu_0,\sigma_0^2)$ in the prior 
distribution for the eigenvalues, thus treating the distribution 
as a sampling model. 
As with the other unknown parameters,
this can be done by iteratively updating these parameters based
on their full conditional distributions.
%As a function of $(\nu_0,\sigma_0^2)$, the
%probability density of $\{\Lambda_1,\ldots, \Lambda_K\}$ is
%proportional to
%\begin{eqnarray*}
%p(\Lambda_1,\ldots, \Lambda_K | \nu_0,\sigma_0^2 ) \propto
% p(\nu_0,\sigma_0^2) & \times  & 
%  \Gamma(\nu_0/2)^{pK} \nu_0^{\nu_0 p K/2} \exp\left \{ \frac{\nu_0}{2}\sum_{k}\sum_{j} 
%   \log  \lambda_{j,k}^{-1}  \right \}  \times \\ 
% & & (\sigma_0^2)^{\nu_0 p K/2} \exp \left \{ -\sigma_0^2  \frac{\nu_0}{2} 
%    \sum_k \sum_j \lambda_{j,k}^{-1}  \right \}
%\end{eqnarray*}
%The second line of the above equation indicates 
Straightforward calculations show that a gamma prior
distribution for $\sigma_0^2$ results in a gamma full conditional distribution.
The full conditional distribution for $\nu_0$ is non-standard, but if
$\nu_0$ is restricted to the integers then its
full conditional distribution can easily be sampled from.

Another possible model extension is to situations where the number of variables 
is larger     
than any of the within-group sample sizes. 
In these cases, full-rank covariance estimation can become unstable and computationally 
intractable. 
A remedy to this problem is to use a factor analysis model, 
in which a covariance matrix $\Sigma_k$ is assumed equal to 
$\m U_k \m D_k \m U_k^T + \sigma_k^2\m I$, where $\m D_k$ is 
a positive diagonal 
matrix and
 $\m U_k$ is a $p\times r$ orthonormal matrix with $r<p$, an element of 
the Stiefel manifold $\mathcal S_{r,p}$. 
As before, heterogeneity across covariance matrices can be expressed 
by heterogeneity in these matrix components, and a Bingham model on 
 $\mathcal S_{r,p}$, similar to the one used in 
this paper, can be used to express heterogeneity among $\m U_1,\ldots, \m U_K$. 

Computer code and data for the examples in this article are available 
at 

\begin{center}
\href{http://www.stat.washington.edu/hoff/}{\tt http://www.stat.washington.edu/hoff/}. 
\end{center}

\bibliographystyle{plainnat}
\bibliography{main}

\begin{thebibliography}{16}
\providecommand{\natexlab}[1]{#1}
\providecommand{\url}[1]{\texttt{#1}}
\expandafter\ifx\csname urlstyle\endcsname\relax
  \providecommand{\doi}[1]{doi: #1}\else
  \providecommand{\doi}{doi: \begingroup \urlstyle{rm}\Url}\fi

\bibitem[Anderson(1965)]{anderson_1965}
George~A. Anderson.
\newblock An asymptotic expansion for the distribution of the latent roots of
  the estimated covariance matrix.
\newblock \emph{Ann. Math. Statist.}, 36:\penalty0 1153--1173, 1965.
\newblock ISSN 0003-4851.

\bibitem[Bingham(1974)]{bingham_1974}
Christopher Bingham.
\newblock An antipodally symmetric distribution on the sphere.
\newblock \emph{Ann. Statist.}, 2:\penalty0 1201--1225, 1974.
\newblock ISSN 0090-5364.

\bibitem[Boik(2002)]{boik_2002}
Robert~J. Boik.
\newblock Spectral models for covariance matrices.
\newblock \emph{Biometrika}, 89\penalty0 (1):\penalty0 159--182, 2002.
\newblock ISSN 0006-3444.

\bibitem[Constantine and Muirhead(1976)]{constantine_muirhead_1976}
A.~G. Constantine and R.~J. Muirhead.
\newblock Asymptotic expansions for distributions of latent roots in
  multivariate analysis.
\newblock \emph{J. Multivariate Anal.}, 6\penalty0 (3):\penalty0 369--391,
  1976.
\newblock ISSN 0047-259x.

\bibitem[Flury(1987)]{flury_1987}
Bernhard~K. Flury.
\newblock Two generalizations of the common principal component model.
\newblock \emph{Biometrika}, 74\penalty0 (1):\penalty0 59--69, 1987.
\newblock ISSN 0006-3444.

\bibitem[Flury(1984)]{flury_1984}
Bernhard~N. Flury.
\newblock Common principal components in {$k$} groups.
\newblock \emph{J. Amer. Statist. Assoc.}, 79\penalty0 (388):\penalty0
  892--898, 1984.
\newblock ISSN 0162-1459.

\bibitem[Gelman et~al.(2008)Gelman, van Dyk, Huang, and
  Boscardin]{gelman_vandyk_huang_boscardin_2008}
Andrew Gelman, David~A. van Dyk, Zaiying Huang, and John~W. Boscardin.
\newblock Using redundant parameterizations to fit hierarchical models.
\newblock \emph{Journal of Computational and Graphical Statistics}, 17\penalty0
  (1):\penalty0 95--122, March 2008.

\bibitem[Gupta and Nagar(2000)]{gupta_nagar_2000}
A.~K. Gupta and D.~K. Nagar.
\newblock \emph{Matrix variate distributions}, volume 104 of \emph{Chapman \&
  Hall/CRC Monographs and Surveys in Pure and Applied Mathematics}.
\newblock Chapman \& Hall/CRC, Boca Raton, FL, 2000.
\newblock ISBN 1-58488-046-5.

\bibitem[Herz(1955)]{herz_1955}
Carl~S. Herz.
\newblock Bessel functions of matrix argument.
\newblock \emph{Ann. of Math. (2)}, 61:\penalty0 474--523, 1955.
\newblock ISSN 0003-486X.

\bibitem[Hoff(2007{\natexlab{a}})]{hoff_2007b}
Peter~D. Hoff.
\newblock Extending the rank likelihood for semiparametric copula estimation.
\newblock \emph{Ann. Appl. Statist.}, 1\penalty0 (1):\penalty0 265--283,
  2007{\natexlab{a}}.

\bibitem[Hoff(2007{\natexlab{b}})]{hoff_2007c}
Peter~D. Hoff.
\newblock Simulation of the matrix {B}ingham-von {M}ises-{F}isher distribution,
  with applications to multivariate and relational data.
\newblock Technical report, University of Washington, 2007{\natexlab{b}}.

\bibitem[Khatri and Mardia(1977)]{khatri_mardia_1977}
C.~G. Khatri and K.~V. Mardia.
\newblock The von {M}ises-{F}isher matrix distribution in orientation
  statistics.
\newblock \emph{J. Roy. Statist. Soc. Ser. B}, 39\penalty0 (1):\penalty0
  95--106, 1977.
\newblock ISSN 0035-9246.

\bibitem[Koev and Edelman(2006)]{koev_edelman_2006}
Plamen Koev and Alan Edelman.
\newblock The efficient evaluation of the hypergeometric function of a matrix
  argument.
\newblock \emph{Math. Comp.}, 75\penalty0 (254):\penalty0 833--846
  (electronic), 2006.
\newblock ISSN 0025-5718.

\bibitem[Muirhead(1978)]{muirhead_1978}
Robb~J. Muirhead.
\newblock Latent roots and matrix variates: a review of some asymptotic
  results.
\newblock \emph{Ann. Statist.}, 6\penalty0 (1):\penalty0 5--33, 1978.
\newblock ISSN 0090-5364.

\bibitem[Schott(1991)]{schott_1991}
James~R. Schott.
\newblock Some tests for common principal component subspaces in several
  groups.
\newblock \emph{Biometrika}, 78\penalty0 (4):\penalty0 771--777, 1991.
\newblock ISSN 0006-3444.

\bibitem[Schott(1999)]{schott_1999}
James~R. Schott.
\newblock Partial common principal component subspaces.
\newblock \emph{Biometrika}, 86\penalty0 (4):\penalty0 899--908, 1999.
\newblock ISSN 0006-3444.

\end{thebibliography}

\end{document}